\documentclass{article}
\usepackage{paper}

\usepackage{times}
\usepackage{url}
\usepackage[hidelinks]{hyperref}
\usepackage[utf8]{inputenc}
\usepackage[small]{caption}
\usepackage{graphicx}
\usepackage{amsmath}
\usepackage{amssymb}
\usepackage{amsthm}
\usepackage{booktabs}
\usepackage{algorithm}
\usepackage{algorithmic}
\usepackage{multirow}
\usepackage{colortbl}
\usepackage{tikz}
\usetikzlibrary{shapes,arrows,arrows.meta,positioning,fit,backgrounds,calc,shapes.geometric}

\definecolor{okabe-blue}{RGB}{86,180,233}
\definecolor{okabe-orange}{RGB}{230,159,0}
\definecolor{okabe-green}{RGB}{0,158,115}
\definecolor{okabe-vermillion}{RGB}{213,94,0}
\definecolor{okabe-pink}{RGB}{204,121,167}
\definecolor{okabe-skyblue}{RGB}{86,180,233}
\definecolor{okabe-deepblue}{RGB}{0,114,178}
\definecolor{computecolor}{RGB}{0,120,200}

\newtheorem{example}{Example}
\newtheorem{theorem}{Theorem}
\newtheorem{lemma}[theorem]{Lemma}

\newtheorem{corollary}[theorem]{Corollary}
\newtheorem{definition}{Definition}
\newtheorem{assumption}{Assumption}
\newtheorem{remark}{Remark}

\newcommand{\N}{\mathcal{N}}

\newcommand{\R}{\mathbb{R}}

\newcommand{\Keff}{K_{\text{eff}}}
\newcommand{\peasy}{p_{\text{easy}}}

\title{Coalition Formation in LLM Agent Networks: \\Stability Analysis and Convergence Guarantees}

\author{
Dongxin Guo\textsuperscript{1}\and
Jikun Wu\textsuperscript{2, 3}\And
Siu-Ming Yiu\textsuperscript{1}
\affiliations
\textsuperscript{1}Department of Compute Science, The University of Hong Kong, Hong Kong, China\\
\textsuperscript{2}Brain Investing Limited, Hong Kong, China
\textsuperscript{3}Stellaris AI Limited, Hong Kong, China
\emails
bettyguo@connect.hku.hk,
hk950014@connect.hku.hk,
smyiu@cs.hku.hk
}

\begin{document}

	\maketitle

	\begin{abstract}
		Large Language Model (LLM) agents are increasingly deployed in multi-agent systems requiring strategic coordination. While recent work has analyzed LLM behavior in two-player games, \emph{coalition formation}, where $n$ agents dynamically form cooperative groups, remains theoretically uncharacterized. We present the first framework grounding coalition formation in LLM agent networks in hedonic game theory with formal stability guarantees. We introduce the \emph{LLM Coalition Formation Game} (LCFG), establish sufficient conditions for Nash-stable partitions, and prove complexity results. Our analysis reveals that LLM agents exhibit bounded rationality characterized by $\epsilon$-rational preferences; we provide both deterministic existence guarantees and consistency-driven stability bounds whose predictions are consistent with empirical outcomes. Experiments with GPT-4, Claude-3, and Llama-3 across 2,400 episodes validate our framework: LLM coalitions achieve Nash stability in 73.2\% of cases under our \emph{Coalition-of-Thought} (CoalT) protocol, compared to 58.4\% under chain-of-thought and 41.8\% under standard prompting ($p < 0.001$). Our framework provides theoretical foundations for designing stable multi-agent LLM systems.
	\end{abstract}

	\section{Introduction}

	The deployment of Large Language Model (LLM) agents in multi-stakeholder environments, from collaborative research systems \cite{wu2023autogen} to automated negotiations \cite{fu2023negotiation,abdelnabi2024cooperation}, raises fundamental questions about their strategic behavior in cooperative settings. While extensive work has characterized LLM performance in two-player strategic games \cite{fan2024rational,duan2024gtbench,buscemi2025fairgame}, the analysis of \emph{coalition formation}, where $n$ agents must dynamically organize into cooperative groups, remains conspicuously absent from the literature.

	This gap is significant for three reasons. First, real-world multi-agent deployments inherently involve coalition dynamics: research teams must form around problems, negotiating agents must identify partners, and distributed systems must allocate agents to tasks. Second, coalition formation exhibits qualitatively different strategic considerations than two-player games; agents must reason about group membership, coalition stability, and collective payoffs rather than bilateral outcomes. Third, recent work explicitly identifies this gap: studies on multi-player LLM games note that ``group size fixed at 2-3 agents precludes examination of larger-group phenomena such as coalition formation'' \cite{huynh2025llmgames}, while recent surveys on game theory and LLMs identify cooperative game theory as ``underexplored'' \cite{sun2025gametheory}.

	\textbf{Why Game Theory?} One might ask whether a formal game-theoretic framework is necessary, or if simpler heuristics suffice. Our experiments demonstrate clear value: (1) Greedy matching achieves only 52.1\% stability with lower welfare than structured approaches, showing myopic optimization fails; (2) vanilla chain-of-thought \cite{wei2022chain} achieves 58.4\%, indicating that generic structured reasoning is insufficient; the game-theoretic framing in CoalT provides an additional 14.8 percentage points improvement ($p < 0.001$); (3) our theoretical bounds are consistent with observed empirical stability rates, enabling principled protocol design.

	\textbf{Why Exclusive Partitions?} We model coalitions as a partition (each agent belongs to exactly one coalition) because in practical multi-agent LLM deployments, resource constraints (including API rate limits, compute budgets, and context window sizes) prevent an agent from simultaneously contributing to multiple coalitions within the same task episode. This mirrors team formation settings where each worker is assigned to one project at a time~\cite{chalkiadakis2011computational}. In our framework, agents are the strategic decision-makers: each agent evaluates whether it would benefit from joining a different coalition and declares its preferences accordingly (Section~\ref{sec:coalt}). A ``deviation'' means an agent's preference declaration indicates it would prefer a different group assignment.

	\begin{figure*}[t]
		\centering
		\begin{tikzpicture}[
			>=Stealth,
			font=\sffamily\small,
			scale=0.95,
			transform shape
			]

			\definecolor{inputbg}{RGB}{239,246,255}
			\definecolor{inputborder}{RGB}{147,197,253}
			\definecolor{processbg}{RGB}{254,249,235}
			\definecolor{processborder}{RGB}{252,211,77}
			\definecolor{outputbg}{RGB}{236,253,245}
			\definecolor{outputborder}{RGB}{110,231,183}

			\definecolor{gptcolor}{RGB}{59,130,246}
			\definecolor{claudecolor}{RGB}{234,88,12}
			\definecolor{llamacolor}{RGB}{16,185,129}

			\definecolor{step1color}{RGB}{99,102,241}
			\definecolor{step2color}{RGB}{14,165,233}
			\definecolor{step3color}{RGB}{245,158,11}
			\definecolor{step4color}{RGB}{236,72,153}
			\definecolor{step5color}{RGB}{139,92,246}

			\definecolor{latentcolor}{RGB}{139,92,246}
			\definecolor{theoremcolor}{RGB}{75,85,99}

			\fill[inputbg, rounded corners=6pt] (-0.2, -2.3) rectangle (3.5, 3.2);
			\draw[inputborder, rounded corners=6pt, line width=1pt] (-0.2, -2.3) rectangle (3.5, 3.2);

			\node[font=\sffamily\footnotesize\bfseries, color=gptcolor!80!black] at (1.65, 2.9) {Input Context};

			\fill[white, rounded corners=4pt] (0.1, 1.85) rectangle (3.2, 2.55);
			\draw[gptcolor, rounded corners=4pt, line width=0.8pt] (0.1, 1.75) rectangle (3.2, 2.55);
			\fill[gptcolor] (0.1, 2.35) -- (0.1, 2.51) arc(180:90:4pt) -- (3.16, 2.55) arc(90:0:4pt) -- (3.2, 2.35) -- cycle;
			\node[font=\sffamily\scriptsize\bfseries, white] at (1.65, 2.45) {GPT-4 Agents ($a_1, a_2$)};
			\node[font=\sffamily\tiny, color=black!60, anchor=east] at (0.75, 2.15) {Math};
			\fill[gptcolor!25] (0.70, 2.10) rectangle (1.55, 2.20);
			\fill[gptcolor] (0.70, 2.10) rectangle ({0.70 + 0.85*0.68}, 2.20);
			\node[font=\sffamily\tiny, color=black!50, anchor=west] at (1.62, 2.15) {0.68};
			\node[font=\sffamily\tiny, color=black!60, anchor=east] at (0.78, 1.95) {Facts};
			\fill[gptcolor!25] (0.70, 1.90) rectangle (1.55, 2.00);
			\fill[gptcolor] (0.70, 1.90) rectangle ({0.70 + 0.85*0.73}, 2.00);
			\node[font=\sffamily\tiny, color=black!50, anchor=west] at (1.62, 1.95) {0.73};

			\fill[white, rounded corners=4pt] (0.1, 0.65) rectangle (3.2, 1.55);
			\draw[claudecolor, rounded corners=4pt, line width=0.8pt] (0.1, 0.65) rectangle (3.2, 1.55);
			\fill[claudecolor] (0.1, 1.35) -- (0.1, 1.51) arc(180:90:4pt) -- (3.16, 1.55) arc(90:0:4pt) -- (3.2, 1.35) -- cycle;
			\node[font=\sffamily\scriptsize\bfseries, white] at (1.65, 1.45) {Claude-3 Agents ($a_3, a_4$)};
			\node[font=\sffamily\tiny, color=black!60, anchor=east] at (0.79, 1.15) {Facts};
			\fill[claudecolor!25] (0.70, 1.10) rectangle (1.55, 1.20);
			\fill[claudecolor] (0.70, 1.10) rectangle ({0.70 + 0.85*0.78}, 1.20);
			\node[font=\sffamily\tiny, color=black!50, anchor=west] at (1.62, 1.15) {0.78};
			\node[font=\sffamily\tiny, color=black!60, anchor=east] at (0.79, 0.95) {Logic};
			\fill[claudecolor!25] (0.70, 0.90) rectangle (1.55, 1.00);
			\fill[claudecolor] (0.70, 0.90) rectangle ({0.70 + 0.85*0.74}, 1.00);
			\node[font=\sffamily\tiny, color=black!50, anchor=west] at (1.62, 0.95) {0.74};

			\fill[white, rounded corners=4pt] (0.1, -0.55) rectangle (3.2, 0.35);
			\draw[llamacolor, rounded corners=4pt, line width=0.8pt] (0.1, -0.55) rectangle (3.2, 0.35);
			\fill[llamacolor] (0.1, 0.15) -- (0.1, 0.31) arc(180:90:4pt) -- (3.16, 0.35) arc(90:0:4pt) -- (3.2, 0.15) -- cycle;
			\node[font=\sffamily\scriptsize\bfseries, white] at (1.65, 0.25) {Llama-3 Agents ($a_5, a_6$)};
			\node[font=\sffamily\tiny, color=black!60, anchor=east] at (0.79, -0.05) {Logic};
			\fill[llamacolor!25] (0.70, -0.10) rectangle (1.55, 0.00);
			\fill[llamacolor] (0.70, -0.10) rectangle ({0.70 + 0.85*0.79}, 0.00);
			\node[font=\sffamily\tiny, color=black!50, anchor=west] at (1.62, -0.05) {0.79};
			\node[font=\sffamily\tiny, color=black!60, anchor=east] at (0.75, -0.25) {Math};
			\fill[llamacolor!25] (0.70, -0.30) rectangle (1.55, -0.20);
			\fill[llamacolor] (0.70, -0.30) rectangle ({0.70 + 0.85*0.58}, -0.20);
			\node[font=\sffamily\tiny, color=black!50, anchor=west] at (1.62, -0.25) {0.58};

			\fill[theoremcolor!8, rounded corners=3pt] (0.1, -2.0) rectangle (3.2, -0.85);
			\draw[theoremcolor!40, rounded corners=3pt, line width=0.6pt] (0.1, -2.0) rectangle (3.2, -0.85);
			\node[font=\sffamily\tiny\bfseries, color=theoremcolor!70] at (1.65, -1.05) {Coalition Formation Task};
			\node[font=\sffamily\tiny, color=theoremcolor!60, text width=2.8cm, align=center] at (1.65, -1.55) {Agents must partition into stable, complementary coalitions};

			\draw[->, line width=1.5pt, color=black!30] (3.7, 0.5) -- (4.5, 0.5);

			\fill[processbg, rounded corners=6pt] (4.7, -2.3) rectangle (11.3, 3.2);
			\draw[processborder, rounded corners=6pt, line width=1pt, dashed] (4.7, -2.3) rectangle (11.3, 3.2);

			\node[font=\sffamily\footnotesize\bfseries, color=step3color!80!black] at (8.0, 2.9) {Coalition-of-Thought (CoalT)};

			\fill[white, rounded corners=3pt] (5.0, 1.6) rectangle (5.8, 2.55);
			\draw[step1color, rounded corners=3pt, line width=0.7pt] (5.0, 1.6) rectangle (5.8, 2.55);
			\fill[step1color] (5.0, 2.35) -- (5.0, 2.52) arc(180:90:3pt) -- (5.77, 2.55) arc(90:0:3pt) -- (5.8, 2.35) -- cycle;
			\node[font=\sffamily\fontsize{6}{7}\selectfont\bfseries, white] at (5.4, 2.45) {Step 1};
			\node[font=\sffamily\fontsize{4.5}{6.5}\selectfont, color=black!60, text width=0.7cm, align=center] at (5.4, 1.95) {Capability\\Analysis};

			\fill[white, rounded corners=3pt] (6.15, 1.6) rectangle (6.95, 2.55);
			\draw[step2color, rounded corners=3pt, line width=0.7pt] (6.15, 1.6) rectangle (6.95, 2.55);
			\fill[step2color] (6.15, 2.35) -- (6.15, 2.52) arc(180:90:3pt) -- (6.92, 2.55) arc(90:0:3pt) -- (6.95, 2.35) -- cycle;
			\node[font=\sffamily\fontsize{6}{7}\selectfont\bfseries, white] at (6.55, 2.45) {Step 2};
			\node[font=\sffamily\fontsize{4.5}{6.5}\selectfont, color=black!60, text width=0.7cm, align=center] at (6.55, 1.95) {Complem.\\Assess.};

			\fill[white, rounded corners=3pt] (7.3, 1.6) rectangle (8.1, 2.55);
			\draw[step3color, rounded corners=3pt, line width=0.7pt] (7.3, 1.6) rectangle (8.1, 2.55);
			\fill[step3color] (7.3, 2.35) -- (7.3, 2.52) arc(180:90:3pt) -- (8.07, 2.55) arc(90:0:3pt) -- (8.1, 2.35) -- cycle;
			\node[font=\sffamily\fontsize{6}{7}\selectfont\bfseries, white] at (7.7, 2.45) {Step 3};
			\node[font=\sffamily\fontsize{4.5}{6.5}\selectfont, color=black!60, text width=0.7cm, align=center] at (7.68, 1.95) {Value\\Estimation};

			\fill[white, rounded corners=3pt] (8.45, 1.6) rectangle (9.25, 2.55);
			\draw[step4color, rounded corners=3pt, line width=0.7pt] (8.45, 1.6) rectangle (9.25, 2.55);
			\fill[step4color] (8.45, 2.35) -- (8.45, 2.52) arc(180:90:3pt) -- (9.22, 2.55) arc(90:0:3pt) -- (9.25, 2.35) -- cycle;
			\node[font=\sffamily\fontsize{6}{7}\selectfont\bfseries, white] at (8.85, 2.45) {Step 4};
			\node[font=\sffamily\fontsize{4.5}{6.5}\selectfont, color=black!60, text width=0.7cm, align=center] at (8.85, 1.95) {Coord.\\Cost};

			\fill[white, rounded corners=3pt] (9.6, 1.6) rectangle (10.4, 2.55);
			\draw[step5color, rounded corners=3pt, line width=0.7pt] (9.6, 1.6) rectangle (10.4, 2.55);
			\fill[step5color] (9.6, 2.35) -- (9.6, 2.52) arc(180:90:3pt) -- (10.37, 2.55) arc(90:0:3pt) -- (10.4, 2.35) -- cycle;
			\node[font=\sffamily\fontsize{6}{7}\selectfont\bfseries, white] at (10.0, 2.45) {Step 5};
			\node[font=\sffamily\fontsize{4.5}{6.5}\selectfont, color=black!60, text width=0.7cm, align=center] at (9.96, 1.95) {Preference\\Declare};

			\draw[-{Stealth[length=2pt, width=2pt]}, line width=0.5pt, color=black!50] (5.85, 2.05) -- (6.1, 2.05);
			\draw[-{Stealth[length=2pt, width=2pt]}, line width=0.5pt, color=black!50] (7.0, 2.05) -- (7.25, 2.05);
			\draw[-{Stealth[length=2pt, width=2pt]}, line width=0.5pt, color=black!50] (8.15, 2.05) -- (8.4, 2.05);
			\draw[-{Stealth[length=2pt, width=2pt]}, line width=0.5pt, color=black!50] (9.3, 2.05) -- (9.55, 2.05);

			\node[font=\sffamily\scriptsize\bfseries, color=latentcolor!70] at (8.0, 1.25) {Coalition Formation in Latent Space};

			\fill[latentcolor!5, rounded corners=4pt] (5.0, -1.8) rectangle (7.0, 0.95);
			\draw[latentcolor!25, rounded corners=4pt, line width=0.6pt] (5.0, -1.8) rectangle (7.0, 0.95);
			\node[font=\sffamily\tiny\bfseries, color=latentcolor!50] at (6.0, 0.75) {Before};

			\fill[gptcolor] (5.4, 0.35) circle (4pt);
			\fill[gptcolor] (6.4, -0.1) circle (4pt);
			\fill[claudecolor] (5.7, -0.45) circle (4pt);
			\fill[claudecolor] (6.55, 0.35) circle (4pt);
			\fill[llamacolor] (5.35, -0.9) circle (4pt);
			\fill[llamacolor] (6.2, -1.35) circle (4pt);

			\node[font=\sffamily\tiny, color=black!40] at (5.4, 0.6) {$a_1$};
			\node[font=\sffamily\tiny, color=black!40] at (6.4, 0.15) {$a_2$};
			\node[font=\sffamily\tiny, color=black!40] at (5.7, -0.2) {$a_3$};
			\node[font=\sffamily\tiny, color=black!40] at (6.55, 0.6) {$a_4$};
			\node[font=\sffamily\tiny, color=black!40] at (5.35, -0.65) {$a_5$};
			\node[font=\sffamily\tiny, color=black!40] at (6.2, -1.1) {$a_6$};

			\draw[->, line width=1.2pt, color=latentcolor!50] (7.25, -0.4) -- (8.75, -0.4);
			\node[font=\sffamily\tiny, color=latentcolor!60, fill=white, inner sep=2pt] at (8.0, -0.15) {$h \leftarrow h + \alpha \cdot \Delta h$};
			\node[font=\sffamily\tiny, color=latentcolor!40] at (8.0, -0.7) {Iterative $(T=k)$};

			\fill[latentcolor!5, rounded corners=4pt] (9.0, -1.8) rectangle (11.0, 0.95);
			\draw[latentcolor!25, rounded corners=4pt, line width=0.6pt] (9.0, -1.8) rectangle (11.0, 0.95);
			\node[font=\sffamily\tiny\bfseries, color=latentcolor!50] at (10.0, 0.75) {After};

			\draw[outputborder, rounded corners=4pt, line width=1pt, fill=outputbg] (9.25, 0.15) rectangle (10.05, 0.6);
			\fill[gptcolor] (9.45, 0.38) circle (4pt);
			\fill[claudecolor] (9.85, 0.38) circle (4pt);
			\node[font=\sffamily\tiny, color=theoremcolor!60] at (9.65, 0.0) {$C_1$};

			\draw[outputborder, rounded corners=4pt, line width=1pt, fill=outputbg] (9.2, -0.75) rectangle (10.55, -0.15);
			\fill[gptcolor] (9.45, -0.45) circle (4pt);
			\fill[llamacolor] (9.85, -0.45) circle (4pt);
			\fill[llamacolor] (10.25, -0.45) circle (4pt);
			\node[font=\sffamily\tiny, color=theoremcolor!60] at (9.85, -0.9) {$C_2$};

			\draw[outputborder, rounded corners=4pt, line width=1pt, fill=outputbg] (9.75, -1.5) rectangle (10.35, -1.1);
			\fill[claudecolor] (10.05, -1.3) circle (4pt);
			\node[font=\sffamily\tiny, color=theoremcolor!60] at (10.05, -1.65) {$C_3$};

			\draw[->, line width=1.5pt, color=black!30] (11.5, 0.5) -- (12.3, 0.5);

			\fill[outputbg, rounded corners=6pt] (12.5, -2.3) rectangle (16.2, 3.2);
			\draw[outputborder, rounded corners=6pt, line width=1pt] (12.5, -2.3) rectangle (16.2, 3.2);

			\node[font=\sffamily\footnotesize\bfseries, color=llamacolor!80!black] at (14.35, 2.9) {Output};

			\node[font=\sffamily\scriptsize\bfseries, color=theoremcolor!70] at (14.35, 2.5) {Nash-Stable Partition $\pi^*$};

			\fill[white, rounded corners=3pt] (12.8, 1.5) rectangle (15.9, 2.15);
			\draw[outputborder, rounded corners=3pt, line width=0.8pt] (12.8, 1.5) rectangle (15.9, 2.15);
			\node[font=\sffamily\tiny\bfseries, color=theoremcolor!60] at (13.1, 1.82) {$C_1$:};
			\fill[gptcolor] (13.5, 1.82) circle (5pt);
			\node[font=\sffamily\tiny, white] at (13.5, 1.82) {$a_1$};
			\fill[claudecolor] (14.1, 1.82) circle (5pt);
			\node[font=\sffamily\tiny, white] at (14.1, 1.82) {$a_3$};
			\node[font=\sffamily\tiny, color=llamacolor!80!black] at (15.4, 1.82) {\checkmark Stable};

			\fill[white, rounded corners=3pt] (12.8, 0.65) rectangle (15.9, 1.3);
			\draw[outputborder, rounded corners=3pt, line width=0.8pt] (12.8, 0.65) rectangle (15.9, 1.3);
			\node[font=\sffamily\tiny\bfseries, color=theoremcolor!60] at (13.1, 0.97) {$C_2$:};
			\fill[gptcolor] (13.5, 0.97) circle (5pt);
			\node[font=\sffamily\tiny, white] at (13.5, 0.97) {$a_2$};
			\fill[llamacolor] (14.1, 0.97) circle (5pt);
			\node[font=\sffamily\tiny, white] at (14.1, 0.97) {$a_5$};
			\fill[llamacolor] (14.7, 0.97) circle (5pt);
			\node[font=\sffamily\tiny, white] at (14.7, 0.97) {$a_6$};
			\node[font=\sffamily\tiny, color=llamacolor!80!black] at (15.4, 0.97) {\checkmark Stable};

			\fill[white, rounded corners=3pt] (12.8, -0.2) rectangle (15.9, 0.45);
			\draw[outputborder, rounded corners=3pt, line width=0.8pt] (12.8, -0.2) rectangle (15.9, 0.45);
			\node[font=\sffamily\tiny\bfseries, color=theoremcolor!60] at (13.1, 0.12) {$C_3$:};
			\fill[claudecolor] (13.5, 0.12) circle (5pt);
			\node[font=\sffamily\tiny, white] at (13.5, 0.12) {$a_4$};
			\node[font=\sffamily\tiny, color=llamacolor!80!black] at (15.4, 0.12) {\checkmark Stable};

			\fill[outputborder!20, rounded corners=6pt] (12.8, -1.35) rectangle (15.9, -0.5);
			\draw[outputborder, rounded corners=6pt, line width=1pt] (12.8, -1.35) rectangle (15.9, -0.5);
			\node[font=\sffamily\Large\bfseries, color=llamacolor!85!black] at (14.35, -0.75) {73.2\%};
			\node[font=\sffamily\tiny, color=theoremcolor!60] at (14.35, -1.1) {Nash Stability Rate};

			\fill[white, rounded corners=3pt] (12.8, -2.15) rectangle (15.9, -1.55);
			\draw[theoremcolor!20, rounded corners=3pt, line width=0.5pt] (12.8, -2.15) rectangle (15.9, -1.55);
			\node[font=\sffamily\tiny, color=theoremcolor!50] at (13.4, -1.72) {Standard:};
			\node[font=\sffamily\tiny, color=theoremcolor!50] at (14.2, -1.72) {41.8\%};
			\node[font=\sffamily\tiny, color=theoremcolor!50] at (15.15, -1.72) {CoT: 58.4\%};
			\node[font=\sffamily\tiny\bfseries, color=llamacolor!80!black] at (14.4, -1.97) {CoalT: 73.2\%};

			\fill[theoremcolor!5, rounded corners=5pt] (4.3, -3.4) rectangle (12.0, -2.6);
			\draw[theoremcolor!30, rounded corners=5pt, line width=0.7pt] (4.3, -3.4) rectangle (12.0, -2.6);
			\node[font=\sffamily\footnotesize\bfseries, color=theoremcolor!70] at (5.35, -3.0) {Theorem 2:};
			\node[font=\small, color=theoremcolor!60] at (7.85, -3.0) {$\Pr[\text{Nash-stable}] \propto p^{K_{\text{eff}}}$};
			\node[font=\sffamily\tiny, color=theoremcolor!50] at (10.75, -3.0) {\textit{Consistency} $\Rightarrow$ \textit{Stability}};

			\fill[white, rounded corners=3pt] (-0.2, -3.4) rectangle (4.0, -2.6);
			\draw[theoremcolor!15, rounded corners=3pt, line width=0.5pt] (-0.2, -3.4) rectangle (4.0, -2.6);
			\fill[gptcolor] (0.2, -3.0) circle (4pt);
			\node[font=\sffamily\tiny, color=theoremcolor!60] at (0.8, -3.0) {GPT-4};
			\fill[claudecolor] (1.5, -3.0) circle (4pt);
			\node[font=\sffamily\tiny, color=theoremcolor!60] at (2.2, -3.0) {Claude-3};
			\fill[llamacolor] (3.0, -3.0) circle (4pt);
			\node[font=\sffamily\tiny, color=theoremcolor!60] at (3.6, -3.0) {Llama-3};

		\end{tikzpicture}
		\caption{\textbf{LCFG Framework Overview.} \textit{Left}: Input context with heterogeneous LLM agents showing capability profiles across Math, Facts, and Logic dimensions. \textit{Center}: Coalition-of-Thought (CoalT) reasoning module, a 5-step pipeline that guides agents through structured coalition evaluation. The latent space visualization shows agents transitioning from scattered initial positions to clustered Nash-stable coalitions through iterative refinement ($h \leftarrow h + \alpha \cdot \Delta h$). The dashed border indicates the trainable/promptable module. \textit{Right}: Output Nash-stable partition $\pi^*$ achieving 73.2\% stability rate (+14.8pp over vanilla CoT). \textbf{Key insight}: Theorem~\ref{thm:probabilistic} shows stability scales with preference consistency ($p^{K_{\text{eff}}}$), not perfect rationality.}
		\label{fig:framework}
	\end{figure*}
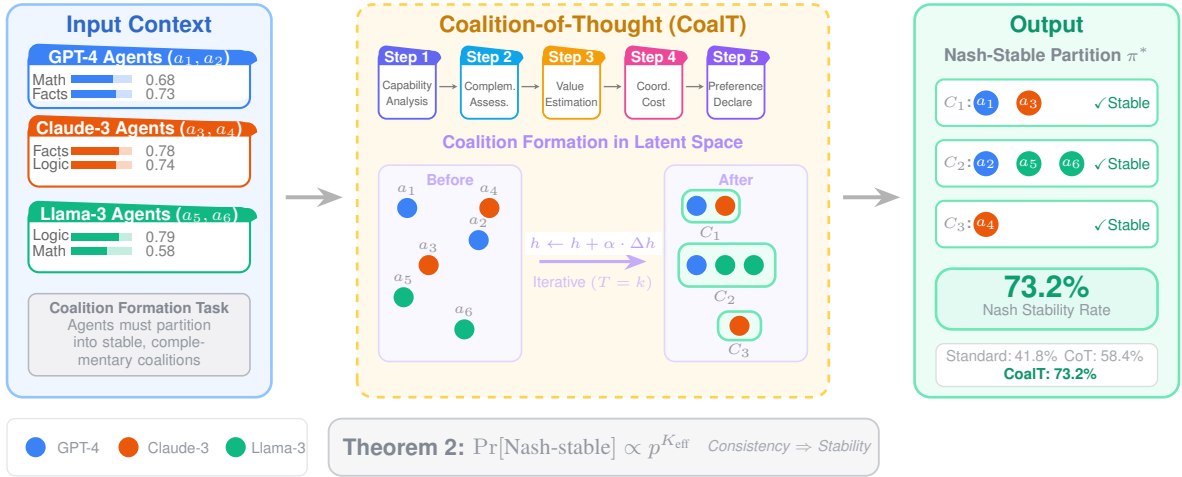

	\textbf{Contributions.} We address this gap through four contributions:
	\begin{enumerate}
		\item \textbf{Formal Framework}: We introduce the \emph{LLM Coalition Formation Game} (LCFG), extending hedonic game theory \cite{bogomolnaia2002stability} to characterize LLM agents with capability profiles and bounded-rational preferences (Section~\ref{sec:framework}).

		\item \textbf{Stability Theorems}: We establish \emph{two complementary guarantees}: (i) deterministic existence under ideal conditions ($\epsilon < \delta/2$), and (ii) \emph{consistency-driven stability bounds} using logit dynamics analysis that decompose stability into preference consistency and game structure factors, providing a conservative lower bound consistent with observed outcomes (Section~\ref{sec:theory}).

		\item \textbf{Complexity Analysis}: We show that \emph{computing} a Nash-stable partition is NP-hard in general hedonic games, motivating structural assumptions under which LCFGs become polynomial-time solvable via \emph{capability-monotonicity} constraints that hold in practice (Section~\ref{sec:complexity}).

		\item \textbf{Experimental Validation}: Through 2,400 coalition formation episodes across three LLM architectures with \emph{five baselines including vanilla CoT}, we validate our theoretical framework with formal statistical testing and introduce \emph{Coalition-of-Thought} (CoalT), a prompting protocol achieving significant stability improvements (Section~\ref{sec:experiments}).
	\end{enumerate}

	\section{Related Work}
	\label{sec:related}

	\textbf{Coalition Game Theory.} Coalition formation in multi-agent systems has been extensively studied in classical settings \cite{chalkiadakis2011computational,rahwan2015coalition}. Hedonic games \cite{dreze1980hedonic,bogomolnaia2002stability} model agents with preferences over coalition membership, with well-characterized stability concepts: core stability, Nash stability, and individual stability form a hierarchy of solution concepts \cite{aziz2019hedonic}. Recent work establishes complexity bounds for random hedonic games \cite{bullinger2024random} and addresses popularity in fractional hedonic games \cite{bullinger2025settling}, but none considers LLM agents. Our work bridges this gap by extending hedonic game theory to account for the bounded rationality and stochastic preferences inherent in LLM decision-making.

	\textbf{LLMs in Strategic Settings.} GTBench \cite{duan2024gtbench} and FAIRGAME \cite{buscemi2025fairgame} provide frameworks for analyzing LLM strategic behavior in competitive games, revealing that LLM performance varies significantly across game types and reasoning complexity. LLMArena \cite{chen2024llmarena} and Alympics \cite{mao2025alympics} further demonstrate LLM capabilities in dynamic multi-agent environments. LLM-Deliberation \cite{abdelnabi2024cooperation} extends this to multi-party negotiation scenarios but without formal coalition-theoretic analysis. Recent work on non-cooperative equilibria includes ECON \cite{jiayang2025econ}, which proves existence of Bayesian Nash Equilibrium in multi-LLM coordination, and studies showing that LLM performance degrades with strategic complexity \cite{gandhi2023strategic}. Critically, these works focus on two-player or fixed-group competitive/mixed-motive games; extension to $n$-player coalitional settings with dynamic group formation requires fundamentally new theoretical machinery, which we provide.

	\textbf{Cooperative Game Theory for LLMs.} A growing body of work applies cooperative game theory to LLM agents, primarily through Shapley value-based credit assignment. Shapley-Coop \cite{hua2025shapleycoop} introduces ``Shapley Chain-of-Thought'' for LLM agents to estimate marginal contributions within multi-agent teams, and DAO-Agent \cite{daoagent2025} defines explicit characteristic functions over LLM agent coalitions for blockchain-verified reward allocation. However, these approaches assume a fixed grand coalition and compute contribution scores within it; they do not address the \emph{partition optimization} problem of which coalitions should form, nor do they analyze stability of the resulting structures. Separately, Briman et al.~\cite{briman2025coalition} model LLM-mediated coalition formation where agents have Euclidean preferences and an LLM generates textual compromise proposals. While this work features formal coalition structures and iterative partition evolution, LLMs serve as mediator tools rather than autonomous strategic players, and stability analysis is limited to convergence properties without formal guarantees (Nash, core, or individual stability). Kulkarni et al.~\cite{kulkarni2025dynamic} use LLMs with hypergame theory to \emph{detect} existing coalitions in Diplomacy from natural language, rather than forming them. Our framework differs from all of these in three respects: LLM agents are the strategic players (not tools), the objective is partition optimization with stability guarantees, and we provide a dedicated prompting protocol (CoalT) for coalition reasoning.

	\textbf{Multi-Agent LLM Systems.} AutoGen \cite{wu2023autogen}, CAMEL \cite{li2023camel}, and MetaGPT \cite{hong2023metagpt} enable multi-LLM coordination for complex tasks. AgentVerse \cite{chen2024agentverse} observes emergent behaviors including spontaneous cooperation and role specialization, while ChatDev \cite{qian2024chatdev} demonstrates effective software development through LLM collaboration. However, these systems lack game-theoretic foundations for understanding \emph{when} and \emph{why} stable coordination emerges. Our framework fills this gap by providing formal conditions under which LLM agents converge to stable coalitions, enabling principled system design rather than trial-and-error tuning.

	\textbf{Bounded Rationality and QRE.} Our $\epsilon$-rationality model connects to classical behavioral game theory \cite{simon1955behavioral,camerer2003behavioral}. Quantal response equilibria (QRE) \cite{mckelvey1995quantal} model agents who make errors proportional to payoff differences: higher-value options are more likely chosen but not deterministically. We formally establish the connection: under logit dynamics with parameter $\lambda = 1/\epsilon$, our model yields QRE-equivalent choice probabilities (Appendix~\ref{app:qre}). This connection enables us to leverage decades of QRE analysis while adapting the framework to LLM-specific characteristics such as prompt-dependent consistency.

	\textbf{Positioning.} Table~\ref{tab:related} summarizes how our work relates to prior approaches. While recent work applies Shapley values to LLM agent credit assignment \cite{hua2025shapleycoop} and uses LLMs as mediators for coalition negotiation \cite{briman2025coalition}, we are, to the best of our knowledge, the first to combine: (i) a formal coalitional game model where LLM agents are strategic players, (ii) partition-based stability analysis (Nash stability, core stability) with provable guarantees, and (iii) a prompting protocol specifically designed for coalition reasoning.

	\begin{table}[t]
		\centering
		\caption{Comparison with related approaches. \emph{Coalitions}: models coalition partition structures. \emph{Stability}: formal game-theoretic stability analysis. \emph{Protocol}: dedicated reasoning protocol for coalition decisions.}
		\label{tab:related}
		\small
		\begin{tabular}{@{}lcccc@{}}
			\toprule
			\textbf{Work} & \textbf{$n$-player} & \textbf{Coalitions} & \textbf{Stability} & \textbf{Protocol} \\
			\midrule
			GTBench & \texttimes & \texttimes & \texttimes & \texttimes \\
			FAIRGAME & \texttimes & \texttimes & \texttimes & \texttimes \\
			Shapley-Coop & \checkmark & \texttimes & \texttimes & \texttimes \\
			Briman et al. & \checkmark & \checkmark & \texttimes & \texttimes \\
			Kulkarni et al. & \checkmark & \checkmark$^*$ & \texttimes & \texttimes \\
			AutoGen & \checkmark & \texttimes & \texttimes & \texttimes \\
			\textbf{Ours} & \checkmark & \checkmark & \checkmark & \checkmark \\
			\bottomrule
			\multicolumn{5}{l}{\scriptsize $^*$Detection only, not formation.} \\
		\end{tabular}
	\end{table}

	\section{Problem Formulation}
	\label{sec:framework}

	We formalize coalition formation among LLM agents as a \emph{hedonic game}~\cite{dreze1980hedonic,bogomolnaia2002stability}, a class of cooperative games where each agent's preferences depend solely on the members of its own coalition. This is natural for LLM multi-agent systems: an agent's performance in a coalition depends on who it collaborates with, not on how other coalitions are organized.

	\subsection{LLM Coalition Formation Games}

	\begin{definition}[LLM Agent]
		\label{def:agent}
		An \emph{LLM agent} is a tuple $a_i = (m_i, \theta_i, \mathbf{c}_i)$ where $m_i \in \mathcal{M}$ is the model architecture (e.g., GPT-4, Claude-3), $\theta_i \in \Theta = [0,2] \times \Sigma^*$ specifies configuration (temperature $\tau \in [0,2]$ and system prompt $s \in \Sigma^*$), and $\mathbf{c}_i \in [0,1]^d$ is a \emph{capability profile} over $d$ skill dimensions (e.g., mathematical reasoning, factual knowledge, logical analysis).
	\end{definition}

	Capability profiles are estimated empirically by evaluating each LLM on domain-specific benchmarks (Section~\ref{sec:experiments}). They serve as the bridge between the abstract hedonic game model and concrete LLM agents: the coalition value function is defined in terms of the capability profiles of its members.

	\begin{definition}[LLM Coalition Formation Game]
		\label{def:lcfg}
		An \emph{LLM Coalition Formation Game} (LCFG) is a hedonic game $G = (\N, v, \{\succsim_i\}_{i \in \N})$ where:
		\begin{itemize}
			\item $\N = \{a_1, \ldots, a_n\}$ is a set of LLM agents (Definition~\ref{def:agent})
			\item $v: 2^\N \to \R$ is a \emph{coalition value function} parameterized by member capability profiles
			\item $\succsim_i$ is agent $a_i$'s \emph{preference relation} over coalitions containing $a_i$, informed by per-capita value $v_i(S) = v(S)/|S|$ (see Definition~\ref{def:epsilon})
		\end{itemize}
	\end{definition}

	The coalition value function captures joint task performance:
	\begin{equation}
		\label{eq:value}
		v(S) = \phi\left(\bigoplus_{a_i \in S} \mathbf{c}_i\right) - \psi(|S|)
	\end{equation}
	where $\phi: [0,1]^d \to \R$ aggregates capabilities (we use $\phi(\mathbf{c}) = \|\mathbf{c}\|_1/d$), $\oplus$ denotes componentwise maximum (modeling coverage-based tasks where having any expert suffices), and $\psi: \mathbb{N} \to \R$ captures coordination costs. We model $\psi(k) = \alpha \cdot k^{\beta}$ with $\alpha = 0.15$ and $\beta = 1.3$, empirically calibrated to minimize prediction error (see Appendix~\ref{app:value_function} for sensitivity analysis). The superlinear scaling ($\beta > 1$) is consistent with coordination cost findings~\cite{chalkiadakis2011computational,nisan2007algorithmic}.

	\begin{example}[Worked Example]
		\label{ex:worked}
		Consider three agents with capabilities $\mathbf{c}_1 = (0.68, 0.30, 0.40)$, $\mathbf{c}_2 = (0.40, 0.65, 0.30)$, $\mathbf{c}_3 = (0.30, 0.40, 0.76)$. Using $\phi(\mathbf{c}) = \|\mathbf{c}\|_1/d$ and $\psi(k) = 0.15k^{1.3}$: coalition $\{a_1, a_2\}$ has $v = 0.21$ (per-capita 0.10), while the grand coalition $\{a_1, a_2, a_3\}$ has $v = 0.07$ (per-capita 0.02). Despite higher capability coverage, the grand coalition yields lower per-capita value due to coordination costs, a core tradeoff our framework captures.
	\end{example}

	\subsection{Bounded-Rational Preferences}

	Classical hedonic games assume complete, transitive preferences. LLM agents exhibit \emph{bounded rationality}, i.e., systematic deviations from optimal decision-making:

	\begin{definition}[$\epsilon$-Rational Preferences]
		\label{def:epsilon}
		Agent $a_i$'s preferences are \emph{$\epsilon$-rational} with respect to value function $v$ if for all coalitions $S, T \ni a_i$:
		\begin{equation}
			v_i(S) > v_i(T) + \epsilon \implies S \succ_i T
		\end{equation}
		where $v_i(S) = v(S)/|S|$ is $a_i$'s per-capita value in coalition $S$.
	\end{definition}

	Intuitively, $\epsilon$-rationality captures that LLM agents reliably identify the better option when value differences exceed $\epsilon$, but may make inconsistent choices for closer comparisons. Our experiments (Section~\ref{sec:experiments}) estimate $\epsilon \approx 0.15$ for GPT-4 and $\epsilon \approx 0.22$ for Llama-3. These values are estimated by measuring the value gap threshold below which agent choices become near-random (see Appendix~\ref{app:qre} for methodology). Under logit choice models from Quantal Response Equilibrium (QRE) theory \cite{mckelvey1995quantal}, this corresponds to precision parameters $\lambda = 1/\epsilon \approx 6.7$ (GPT-4) and $\lambda \approx 4.5$ (Llama-3), indicating substantial but bounded deviations from perfect rationality.

	\subsection{Coalition Structures and Stability}

	A \emph{coalition structure} $\pi = \{C_1, \ldots, C_k\}$ is a partition of $\N$. We adopt standard stability concepts:

	\begin{definition}[Stability Concepts]
		\label{def:stability}
		Given LCFG $G$ and coalition structure $\pi$:
		\begin{itemize}
			\item $\pi$ is \emph{Nash-stable} if no agent unilaterally prefers joining another coalition: $\forall a_i \in C \in \pi$, $\forall C' \in \pi \cup \{\emptyset\}$: $C \succsim_i C' \cup \{a_i\}$
			\item $\pi$ is \emph{individually stable} (IS) if no agent can profitably deviate without making any member of the receiving coalition worse off
			\item $\pi$ is \emph{core-stable} if no group of agents can jointly deviate such that all members of the deviating group strictly benefit
		\end{itemize}
	\end{definition}

	Nash stability is the strongest individual-deviation concept: it requires that no agent wants to switch, regardless of whether the receiving coalition consents. Individual stability weakens this by requiring that the receiving coalition does not object. Core stability addresses group deviations: Nash stability $\Rightarrow$ Individual stability, but Nash stability and core stability are generally incomparable in hedonic games~\cite{bogomolnaia2002stability,aziz2019hedonic}. We focus on Nash stability as our primary notion because it is verifiable in polynomial time under our capability-monotonicity assumption (Theorem~\ref{thm:complexity}) and provides strong guarantees against unilateral deviations.

	\textbf{Existence of Nash-Stable Partitions.} Nash-stable partitions are not guaranteed to exist in all hedonic games. Under per-capita value splitting $v_i(S) = v(S)/|S|$ with componentwise-max aggregation, cycles can arise when a low-capability agent benefits from joining a high-capability agent's coalition while the latter prefers to be alone (see Appendix~\ref{app:counterexample} for a formal example). Our Theorem~\ref{thm:existence} identifies sufficient conditions for existence, while Theorem~\ref{thm:probabilistic} provides probabilistic guarantees when these conditions are relaxed.

	\section{Theoretical Analysis}
	\label{sec:theory}

	We establish when LLM agents converge to stable coalition structures, providing both \emph{deterministic guarantees} under ideal conditions and \emph{consistency-driven bounds} for realistic settings.

	\subsection{Deterministic Existence (Ideal Conditions)}

	\begin{assumption}[Value Gap Condition]
		\label{ass:gap}
		An LCFG $G$ satisfies the \emph{$\delta$-value gap condition} if there exists $\delta > 0$ such that for all agents $a_i$ and distinct coalitions $S, T \ni a_i$: either $v_i(S) = v_i(T)$ or $|v_i(S) - v_i(T)| \geq \delta$.
	\end{assumption}

	This assumption is naturally satisfied with benchmark-derived capability profiles. We empirically verify $\delta \approx 0.08$ through exhaustive enumeration of all coalition pairs in our 6-agent setting (Appendix~\ref{app:delta_verification}). Note that $\delta$ is a property of the \emph{specific game instance} derived from our capability profiles and value function parameterization, not a universal constant from the hedonic games literature.

	\begin{assumption}[Potential Alignment]
		\label{ass:potential}
		An LCFG $G$ satisfies \emph{potential alignment} if for the potential function $\Phi(\pi) = \sum_{C \in \pi} v(C)$, every improving deviation (where $v_i(C' \cup \{a_i\}) > v_i(C)$) also increases $\Phi$: $\Phi(\pi') > \Phi(\pi)$.
	\end{assumption}

	Potential alignment holds when agents bring sufficient unique value to receiving coalitions; specifically, when the marginal value $v(C' \cup \{a_i\}) - v(C')$ exceeds the loss $v(C) - v(C \setminus \{a_i\})$. Under componentwise-max aggregation with diverse multi-dimensional capability profiles ($d \geq 2$), this is naturally satisfied when each agent contributes distinct coverage. It may fail in degenerate cases (e.g., one-dimensional capabilities where high-capability agents gain nothing from low-capability partners; see Appendix~\ref{app:counterexample}).

	\begin{theorem}[Deterministic Existence]
		\label{thm:existence}
		Let $G$ be an LCFG satisfying the $\delta$-value gap condition (Assumption~\ref{ass:gap}), potential alignment (Assumption~\ref{ass:potential}), and \emph{capability monotonicity}: for all $S \subseteq \N$ and agents $a_i, a_j$, if $\mathbf{c}_i \leq \mathbf{c}_j$ componentwise, then $v(S \cup \{a_i\}) \leq v(S \cup \{a_j\})$. If all agents have $\epsilon$-rational preferences with $\epsilon < \delta/2$, then a Nash-stable partition exists and can be found in polynomial time.
	\end{theorem}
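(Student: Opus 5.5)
The plan is a potential-function argument. First I reduce $\epsilon$-rational preferences to the underlying value comparisons, then I use potential alignment to rule out cycles of improving moves.

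\textbf{Step 1 (preferences track values).} Fix an agent $a_i$ and two coalitions $S,T\ni a_i$. By the $\delta$-value gap condition (Assumption~\ref{ass:gap}), either $v_i(S)=v_i(T)$ or $|v_i(S)-v_i(T)|\ge\delta$. In the second case, since $\delta>2\epsilon>\epsilon$, Definition~\ref{def:epsilon} gives $S\succ_i T$ exactly when $v_i(S)>v_i(T)$.

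Ties are not covered by Definition~\ref{def:epsilon}, and this is the genuine gap in the hypotheses. I would adopt the convention that ties are resolved in favour of staying. The margin $\epsilon<\delta/2$ guarantees that no $\epsilon$-perturbation of two values separated by at least $\delta$ can reverse their order. Hence every strict preference an agent declares between coalitions corresponds to a strict value improvement of at least $\delta$, and every value improvement of at least $\delta$ is declared.

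It follows that a partition is Nash-stable in the sense of Definition~\ref{def:stability} iff no agent has a deviation with $v_i(C'\cup\{a_i\})>v_i(C)$. Such a deviation then actually improves value by at least $\delta$.

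\textbf{Step 2 (finite improvement path).} Run better-response dynamics: start from the singleton partition and, while some agent has an improving deviation, move it. By Assumption~\ref{ass:potential}, each move strictly increases $\Phi(\pi)=\sum_{C\in\pi}v(C)$. Since there are finitely many partitions, no partition repeats, so the dynamics terminate. By Step 1, the terminal partition is Nash-stable. This establishes existence.

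\textbf{Step 3 (polynomial time).} Each step is easy: checking for an improving deviation takes $O(n\cdot|\pi|)$ value evaluations, each polynomial in $n,d$ under~\eqref{eq:value}. The real difficulty is bounding the number of steps. The natural first attempt would be to quantize the increase in $\Phi$, but a per-capita gap of $\delta$ does not directly quantize the change in $\Phi$.

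What I would try first is to argue instead that the per-capita gains give a quantized increase. The deviating agent gains $v_i(C'\cup\{a_i\})-v_i(C)\ge\delta$. Potential alignment states that the receiving coalition's marginal gain exceeds the sending coalition's loss. I would aim to lower-bound $\Phi(\pi')-\Phi(\pi)$ by a quantity tied to $\delta$, for instance $\delta$ itself. If that fails, I would strengthen the reading of Assumption~\ref{ass:potential} so that the increase is at least $\delta$.

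With an increase of at least $\delta$ per step, and since $\Phi$ is bounded between $-\sum_k\psi(k)$-type terms and $n$ (because $\phi\le 1$ and there are at most $n$ coalitions), the number of steps is $O\big((n+n\alpha n^{\beta})/\delta\big)$. This is polynomial in $n$ for fixed $\delta$.

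Capability monotonicity then serves to make deviation-checking structured. If $\mathbf c_i\le\mathbf c_j$, then any coalition that would accept $a_i$ gains at least as much from $a_j$. This lets me order the candidate receiving coalitions and prune the search.

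I expect the main obstacle to be Step 3, specifically converting the per-agent gap $\delta$ into a uniform lower bound on the increase in $\Phi$. The polynomial bound also depends on treating $1/\delta$ as a constant, or polynomially bounded.
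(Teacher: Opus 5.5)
Your existence argument is the paper's own: the potential $\Phi(\pi)=\sum_{C\in\pi}v(C)$, better-response dynamics, Assumption~\ref{ass:potential} forcing a strict increase of $\Phi$ at every move, and termination at a Nash-stable partition. In two respects you are more careful than Appendix~\ref{app:proof_existence}.

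\emph{Ties.} The paper says the mover's per-capita value rises by ``at least $\delta-\epsilon$'' and tacitly assumes that every strictly preferred move raises $v_i$. What Definition~\ref{def:epsilon} and Assumption~\ref{ass:gap} actually give is that the rise is either $0$ or at least $\delta$. Moves of the first kind are not covered by Assumption~\ref{ass:potential}, so your tie convention (or an equivalent hypothesis) is needed for the argument.

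\emph{Termination.} You get it from finiteness of the set of partitions. The paper instead appeals to a $\delta$-proportional discreteness of coalition values, which Assumption~\ref{ass:gap} does not supply.

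Capability monotonicity is not used in the paper's proof either. Your pruning remark does not reduce the number of improvement steps, and that count is the only real bottleneck.

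That bottleneck is a genuine gap. You located it correctly, and the paper does not close it. The paper's $O(n\Delta_v/\delta)$ step count rests on the discreteness claim above, but Assumption~\ref{ass:gap} only separates per-capita values of two coalitions that share a member. Write $u(S)=v(S)/|S|$, $s=|C|$, $t=|C'|$. A move of $a_i$ from $C$ to $C'$ changes the potential by
\[
\bigl(u(C'\cup\{a_i\})-u(C)\bigr)+t\,\bigl(u(C'\cup\{a_i\})-u(C')\bigr)+(s-1)\,\bigl(u(C\setminus\{a_i\})-u(C)\bigr).
\]
The first bracket is at least $\delta$. Each of the other two is $0$ or at least $\delta$ in absolute value, since each compares coalitions sharing a member. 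Their weighted sum can still be any small positive number: for $s=t=2$ and bracket values $\delta+\eta$, $-\delta$, $\delta$ it equals $\eta$. Assumption~\ref{ass:potential} only requires this sum to be positive.

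Hence the argument bounds the number of moves only by the number of partitions, which is exponential. Strict monotonicity of a potential does not by itself yield polynomial bounds, since better-response paths in potential games can be exponentially long.

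To get the stated conclusion you must add an explicit hypothesis, not a ``reading'' of Assumption~\ref{ass:potential}. It should consist of a uniform lower bound on the per-move increase of $\Phi$ (say $c\,\delta$), together with $\Delta_v/\delta$ being polynomially bounded. With that, your count $O\bigl((n+\alpha n^{\beta})/\delta\bigr)$ goes through. As stated, only the existence half of the theorem is established, by your proposal and by the paper alike.
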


	\begin{proof}[Proof Sketch]
		Under $\epsilon$-rationality and the $\delta$-value gap condition, an improving deviation requires a per-capita value increase of at least $\delta - \epsilon > \delta/2 > 0$. By potential alignment, each such deviation strictly increases $\Phi(\pi) = \sum_{C \in \pi} v(C)$. Since $\Phi$ is bounded above and takes values from a discrete set (by the value gap condition), the improvement path terminates at a Nash-stable partition in at most $O(n \cdot \Delta_v / \delta)$ steps, where $\Delta_v$ is the value range. Full proof in Appendix~\ref{app:proof_existence}.
	\end{proof}

	\begin{remark}[Scope and Limitations of Theorem~\ref{thm:existence}]
		\label{rem:scope}
		The condition $\epsilon < \delta/2$ ensures that $\epsilon$-rational agents can distinguish coalitions with different values: when $\epsilon \geq \delta/2$, the agent's error margin exceeds the value gap, and rational discrimination breaks down. In practice, LLM agents exhibit $\epsilon \approx 0.15$--$0.22$, exceeding $\delta/2 \approx 0.04$. This motivates our main theoretical result (Theorem~\ref{thm:probabilistic}), which provides probabilistic guarantees without requiring $\epsilon < \delta/2$.
	\end{remark}

	\subsection{Consistency-Driven Stability (Realistic Conditions)}

	When $\epsilon > \delta/2$, deterministic guarantees no longer hold. We characterize stability using \emph{logit dynamics} from behavioral game theory \cite{mckelvey1995quantal}.

	\textbf{Key Insight.} Stability depends primarily on \emph{preference consistency}, i.e., whether agents make the same choice when queried repeatedly, rather than perfect rationality. An agent who consistently prefers suboptimal coalitions still contributes to stable outcomes, while an agent with perfect utility but inconsistent choices creates instability.

	\begin{definition}[Preference Consistency]
		\label{def:consistency}
		Agent $a_i$ has \emph{preference consistency} $p_i \in [0,1]$ if, across independent queries for the same coalition comparison, the agent returns the same preference with probability $p_i$.
	\end{definition}

	\begin{definition}[Effective Critical Decisions]
		\label{def:keff}
		Let $\Keff$ denote the number of \emph{effectively critical} decisions, those where the value gap $|\Delta v| < 2\epsilon$, making consistency meaningfully less than 1. Let $K_n - \Keff$ denote ``easy'' decisions with large value gaps where consistency approaches $\peasy \approx 1$.
	\end{definition}

	\begin{theorem}[Consistency-Driven Stability Bound]
		\label{thm:probabilistic}
		Let $G$ be an LCFG with $n$ agents satisfying capability monotonicity. Let $p \in (0,1]$ be the preference consistency on critical decisions, $\peasy \approx 1$ on easy decisions, $\Keff$ the number of critical decisions, and $K_n$ total decisions. Assuming decision independence across agents and coalition comparisons:
		\begin{equation}
			\label{eq:stability_bound}
			\Pr[\text{Nash-stable}] \geq p^{\Keff} \cdot \peasy^{K_n - \Keff} \cdot \gamma(G)
		\end{equation}
		where $\gamma(G)$ is the probability that consistent dynamics reach a Nash-stable partition. Under logit dynamics with precision $\lambda = 1/\bar{\epsilon}$, $\gamma(G)$ is bounded below by $1 - \exp(-\delta/\bar{\epsilon})$, with $\bar{\epsilon}$ the mean rationality bound and $\delta$ the value gap.
	\end{theorem}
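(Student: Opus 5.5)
The bound will follow from conditioning on a ``consistency event'' and then handling the dynamics separately. Let $\mathcal{E}$ be the event that, in every one of the $K_n$ pairwise coalition comparisons queried during the episode, each agent returns its modal (consistent) preference. By Definition~\ref{def:consistency}, each critical comparison is consistent with probability $p$. Each easy comparison is consistent with probability $\peasy$. Under the stated independence across agents and comparisons, the probability factorizes:
\begin{equation*}
\Pr[\mathcal{E}] = p^{\Keff}\,\peasy^{K_n-\Keff}.
\end{equation*}
Since $\Pr[\text{Nash-stable}] \geq \Pr[\text{Nash-stable}\mid \mathcal{E}]\Pr[\mathcal{E}]$, it remains to show that $\Pr[\text{Nash-stable}\mid\mathcal{E}] \geq \gamma(G)$. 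This is essentially the definition of $\gamma(G)$, provided we check that on $\mathcal{E}$ the declared preferences form a fixed, well-defined hedonic preference profile. Consistency freezes each comparison, so the profile is well defined. Capability monotonicity is used here, as in Theorem~\ref{thm:existence}, to keep that profile structurally ordered, so that verifying stability of the induced partition is well posed.

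The substantive step is the lower bound $\gamma(G)\geq 1-\exp(-\delta/\bar\epsilon)$. On $\mathcal{E}$, I would model each agent's choice between staying in its coalition $C$ and moving to $C'\cup\{a_i\}$ as a logit choice with precision $\lambda=1/\bar\epsilon$, following the QRE connection in Appendix~\ref{app:qre}. Under the $\delta$-value gap condition (Assumption~\ref{ass:gap}), any two non-tied options differ by at least $\delta$. So the probability that the modal choice is the value-maximizing one is
\begin{equation*}
\frac{1}{1+e^{-\lambda\delta}} \;\geq\; 1-e^{-\delta/\bar\epsilon}.
\end{equation*}
I would then argue that the consistent dynamics are the best-response improvement path of Theorem~\ref{thm:existence}. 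With potential alignment, this path increases $\Phi$ and terminates at a Nash-stable partition. It therefore suffices that the decisive (modal) preference profile agrees with the value ordering, which gives the claimed lower bound.

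I expect the main obstacle to be this last step. One issue is that obtaining a single factor $1-e^{-\delta/\bar\epsilon}$, rather than a product over all decisions, requires arguing that the modal preferences are determined by one aggregate logit comparison, or that the critical comparisons are already accounted for in $p^{\Keff}$. A second issue is that Theorem~\ref{thm:probabilistic} does not assume potential alignment explicitly. My first attempt would be to treat the logit dynamics as a reversible Markov chain with stationary distribution $\propto e^{\lambda\Phi(\pi)}$. I would then bound the stationary mass on Nash-stable partitions by comparing each non-stable partition to an improving neighbor that gains at least $\delta$ in $\Phi$. Each non-stable state then carries at most an $e^{-\lambda\delta}$ fraction relative to that neighbor, which yields the stated form.
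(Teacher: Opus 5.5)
Your first half is the paper's Steps 1--3. You condition on all $K_n$ comparisons being answered consistently, factor that event as $p^{\Keff}\peasy^{K_n-\Keff}$ under independence, and read the remaining factor as $\gamma(G)$. Your inequality $\Pr[\text{Nash-stable}]\ge\Pr[\text{Nash-stable}\mid\mathcal{E}]\Pr[\mathcal{E}]$ is the correct form of what the paper writes as an equality.

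The gap is the bound $\gamma(G)\ge 1-e^{-\delta/\bar\epsilon}$. Your worry about potential alignment is decisive: the bound does not follow from the stated hypotheses. The two-agent game of Appendix~\ref{app:counterexample} satisfies capability monotonicity, since the nontrivial cases are $0.25\le 0.85$, $0.25\le 0.631$ and $0.631\le 0.85$. It also satisfies the value gap condition with $\delta\approx 0.065$, yet it has no Nash-stable partition. If consistent agents declare their true value order, then $\gamma(G)=0<1-e^{-\delta/\bar\epsilon}$. If each agent freezes a logit draw, then $\gamma(G)\to 0$ as $\bar\epsilon\to 0$, while $1-e^{-\delta/\bar\epsilon}\to 1$. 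Capability monotonicity, which you invoke to keep the profile ``structurally ordered,'' plays no role. The paper reaches its bound only by three extra moves in Step~4: it imports Assumption~\ref{ass:potential}, asserts a stationary law $\propto e^{\lambda\Phi}$, and posits ``a single dominant maximum with potential gap $\delta$.'' It explicitly calls the result approximate. A complete version of your argument must state such hypotheses.

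Even with those hypotheses added, neither of your routes gives the factor as you intend.
\begin{itemize}
\item \emph{First route.} Conditioning on $\mathcal{E}$ freezes each answer at its mode. Under a logit model with nonzero gap, the mode is always the higher-value option. So $1/(1+e^{-\lambda\delta})$ is the probability that a single draw is correct, which is the randomness already charged to $p$ and $\peasy$. Taken literally, your framing plus Assumption~\ref{ass:potential} gives $\gamma(G)=1$ by the finite-improvement argument of Theorem~\ref{thm:existence}, so the logit computation does no work. Under the paper's reading, where a consistent agent may be consistently wrong, potential alignment no longer governs the declared moves at all.
\item \emph{Second route.} For the Gibbs measure $\mu$ and a chosen improving neighbor $f(\pi)$, you get $\mu(\pi)\le e^{-\lambda\delta}\mu(f(\pi))$. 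Because many unstable partitions can share one improving neighbor, summing gives only $\mu(\text{unstable})\le D\,e^{-\lambda\delta}$ with $D=\max_\sigma|f^{-1}(\sigma)|$. Three further problems remain:
\begin{itemize}
\item A $\Phi$-increment of $\delta$ is not available. Assumption~\ref{ass:gap} constrains per-capita values $v_i$, and Assumption~\ref{ass:potential} only gives $\Phi(\pi')>\Phi(\pi)$.
\item The Gibbs form requires an exact potential, which per-capita utilities with $\Phi=\sum_C v(C)$ do not provide.
\item Stationary mass is not the hitting probability that defines $\gamma(G)$.
\end{itemize}
\end{itemize}
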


	\begin{proof}[Proof Sketch]
		By conditional probability, $\Pr[\text{Nash-stable}] = \Pr[\text{consistent}] \cdot \Pr[\text{Nash} \mid \text{consistent}]$. The first factor, under decision independence, is at least $p^{\Keff} \cdot \peasy^{K_n - \Keff}$. The second factor is $\gamma(G)$: given consistent preferences, the system follows deterministic improving dynamics on a potential game, and the stationary distribution of logit dynamics concentrates on potential maxima~\cite{blume1993statistical}. Full derivation in Appendix~\ref{app:proof_probabilistic}.
	\end{proof}

	\textbf{Empirical Validation.} For our setting ($n=6$, $\Keff \approx 5$ critical decisions out of $K_n \approx 15$ total, $p = 0.86$, $\peasy = 0.98$), we estimate $\gamma(G) \approx 0.90$ empirically as the fraction of episodes with consistent decisions that reach Nash stability. This exceeds the formula lower bound $1 - \exp(-0.08/0.17) \approx 0.38$, indicating favorable game structure. Our theoretical lower bound is:
	\begin{equation}
		\Pr[\text{Nash-stable}] \geq 0.86^5 \cdot 0.98^{10} \cdot 0.90 \approx 0.35
	\end{equation}
	The observed rate of 73.2\% substantially exceeds this lower bound, confirming that the consistency-driven framework captures the dominant factor in stability. The monotonic relationship between $p$ and Nash stability rate holds across all experimental conditions (Figure~\ref{fig:stability_consistency}).

	\begin{corollary}[Scaling Law]
		\label{cor:scaling}
		If $\Keff = O(\sqrt{n})$ (as observed empirically in our setting; see Appendix~\ref{app:scaling}), then the Nash stability lower bound scales as $p^{O(\sqrt{n})}$ for large $n$.
	\end{corollary}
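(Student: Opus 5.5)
The plan is to substitute the hypothesis $\Keff = O(\sqrt{n})$ directly into the bound of Theorem~\ref{thm:probabilistic} and then show that the two remaining factors, $\peasy^{K_n-\Keff}$ and $\gamma(G)$, do not change the leading behaviour in $n$. Starting from
\[
\Pr[\text{Nash-stable}] \geq p^{\Keff}\cdot \peasy^{K_n-\Keff}\cdot \gamma(G),
\]
we fix a constant $c$ and threshold $n_0$ with $\Keff \le c\sqrt{n}$ for $n\ge n_0$. Since $p\in(0,1]$, the map $k\mapsto p^k$ is nonincreasing, so $p^{\Keff}\ge p^{c\sqrt{n}} = p^{O(\sqrt n)}$. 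This already gives the claimed rate for the consistency factor.

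Next we treat $\gamma(G)$. Theorem~\ref{thm:probabilistic} gives $\gamma(G)\ge 1-\exp(-\delta/\bar\epsilon)$. This is a constant independent of $n$ provided $\delta$ and $\bar\epsilon$ stay fixed as $n$ grows. We will state this explicitly as a standing assumption: the instance family keeps a uniform value gap and uniform mean rationality. Under it, $\gamma(G)$ contributes only a multiplicative constant, which can be absorbed as $p^{O(1)}$ when $p<1$. The case $p=1$ is trivial.

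The main obstacle is the easy-decision factor $\peasy^{K_n-\Keff}$. The paper treats $\peasy\approx 1$, but $K_n$ grows with $n$: every agent compares its coalition with every alternative in $\pi\cup\{\emptyset\}$, so $K_n$ is roughly $n^2$ in general. Hence $\peasy^{K_n}$ decays unless $\peasy$ is exactly $1$ or approaches $1$ quickly enough. My first attempt would be to make "$\peasy\approx 1$" precise by assuming $1-\peasy = O(1/K_n)$, or more weakly $\ln(1/\peasy)\,K_n = O(\sqrt n\,\ln(1/p))$. Under either assumption $\peasy^{K_n-\Keff}\ge \exp(-O(\sqrt n))= p^{O(\sqrt n)}$ for fixed $p<1$. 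Justifying this uses the logit model: easy decisions have gap at least $2\epsilon$, so the error probability is at most $\exp(-\lambda\cdot 2\epsilon)$ with $\lambda=1/\bar\epsilon$. A constant error rate is not enough on its own, so the corollary has to be read with $\peasy=1$ (idealised) or with a vanishing easy-error rate. I would state this caveat in the proof rather than hide it.

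Combining the three factors gives $\Pr[\text{Nash-stable}]\ge p^{c\sqrt n}\cdot p^{O(\sqrt n)}\cdot p^{O(1)} = p^{O(\sqrt n)}$, which is the claimed scaling of the lower bound. The argument only needs monotonicity of exponentials and the uniformity assumptions on $\delta$, $\bar\epsilon$ and $\peasy$. The empirical claim $\Keff = O(\sqrt n)$ itself is not proved; it is a hypothesis of the corollary.
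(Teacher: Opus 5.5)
Your proposal is correct and follows the same route as the paper. The paper gives no separate proof: it obtains the corollary by substituting $\Keff = O(\sqrt{n})$ into the $p^{\Keff}$ factor of the bound in Theorem~\ref{thm:probabilistic}, treating $\gamma(G)$ as independent of $n$ and $\peasy$ as $1$. Your explicit caveats strengthen the argument rather than leaving a gap: since $K_n - \Keff$ grows at least linearly in $n$, any fixed $\peasy < 1$ would make $\peasy^{K_n-\Keff}$ decay faster than $p^{O(\sqrt{n})}$, so the paper's ``$\peasy \approx 1$'' must mean $\peasy = 1$ or a vanishing easy-error rate, and $\delta$ and $\bar{\epsilon}$ must be uniform in $n$ to keep $\gamma(G)$ bounded away from zero.
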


	\subsection{Convergence Guarantees}

	\begin{theorem}[Convergence]
		\label{thm:convergence}
		Let $G$ be an LCFG satisfying the $\delta$-value gap condition (Assumption~\ref{ass:gap}) and potential alignment (Assumption~\ref{ass:potential}), where all agents have $\epsilon$-rational preferences with $\epsilon < \delta/2$ that are temporally consistent (same query yields same preference). Under improving dynamics where at most one agent deviates per round, the coalition structure converges to a Nash-stable partition in at most $O(n^2 \cdot \Delta_v/\delta)$ rounds, where $\Delta_v = \max_S v(S) - \min_S v(S)$.
	\end{theorem}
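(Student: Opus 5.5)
We prove this with a potential-function argument built on $\Phi(\pi) = \sum_{C \in \pi} v(C)$ from Assumption~\ref{ass:potential}. We consider the sequence of coalition structures $\pi_0, \pi_1, \dots$ produced by improving dynamics, where in each round at most one agent $a_i$ moves from its coalition $C$ to $C' \cup \{a_i\}$ with $C' \in \pi \cup \{\emptyset\}$. First, we argue that a round in which no agent deviates already certifies Nash stability. Because preferences are temporally consistent, a declared preference is a fixed relation, not a random draw. By Definition~\ref{def:stability}, ``no agent wants to switch'' is exactly Nash stability with respect to $\succsim_i$. So it suffices to bound the number of rounds in which a deviation actually occurs.

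Second, we show that every realized deviation raises $\Phi$ by a uniform amount. We justify this in three steps.
\begin{itemize}
\item Suppose $a_i$ declares $C' \cup \{a_i\} \succ_i C$. By $\epsilon$-rationality, if $v_i(C) > v_i(C' \cup \{a_i\}) + \epsilon$, then $a_i$ would strictly prefer $C$, contradicting the declared preference. Hence $v_i(C' \cup \{a_i\}) \ge v_i(C) - \epsilon$.
\item By the $\delta$-value gap condition, two distinct per-capita values differ by at least $\delta > 2\epsilon$. So the values cannot be unequal with $v_i(C' \cup \{a_i\}) < v_i(C)$. A genuine improving move, where the values differ, must therefore satisfy $v_i(C' \cup \{a_i\}) \ge v_i(C) + \delta$. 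Moves between equal-value coalitions we exclude by the standard convention that improving dynamics use strict improvements; ties are no-ops.
\item Potential alignment then gives $\Phi(\pi') > \Phi(\pi)$.
\end{itemize}

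Third, we need a quantitative increment, and we expect this to be the main obstacle. The assumption only gives a strict increase, so we have to extract a gap from the discrete structure. We would compare $\Phi(\pi') - \Phi(\pi)$ with the per-capita gain and argue as follows.
\begin{itemize}
\item The change in $\Phi$ involves only the two affected coalitions: $v(C' \cup \{a_i\}) - v(C') + v(C \setminus \{a_i\}) - v(C)$.
\item Under the alignment hypothesis (marginal gain exceeds marginal loss), this difference dominates a positive multiple of the per-capita jump. Writing $v(S) = |S|\, v_i(S)$, the per-capita jump of at least $\delta$ translates into a gain of at least $\delta/n$ in $\Phi$, since coalition sizes are at most $n$.
\end{itemize}
If this direct comparison fails, the fallback is to invoke the value gap condition on the finitely many values of $\Phi$. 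We would read $\delta$-separation of the $v_i$ as inducing a lattice-like spacing of $\Phi$ values of order $\delta/n$, and use that as the increment.

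Finally, we conclude by counting. $\Phi$ is a sum of at most $n$ coalition values, so its range over all partitions is at most $n \cdot \Delta_v$. Each deviating round increases $\Phi$ by at least $\delta/n$. The number of deviating rounds is therefore at most $n\Delta_v/(\delta/n) = n^2 \Delta_v/\delta$. Rounds without deviation terminate the process at a Nash-stable partition, which gives convergence within $O(n^2 \cdot \Delta_v/\delta)$ rounds.

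This refines the step count in the proof of Theorem~\ref{thm:existence}. The extra factor of $n$ comes precisely from converting per-capita gaps into potential gaps.
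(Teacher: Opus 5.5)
The step you flagged as the main obstacle is a genuine gap, and neither of your two fixes closes it. Potential alignment is a bare strict inequality. The $\delta$-gap condition separates only the per-capita values of a \emph{single} agent across coalitions containing it. Since $\Phi$ equals the sum of all agents' per-capita values, $\Phi(\pi')-\Phi(\pi)$ is $a_i$'s gain plus the changes experienced by the members of $C'$ and of $C\setminus\{a_i\}$. Each term is $0$ or at least $\delta$ in absolute value, but they can cancel almost exactly.

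Concretely, take three agents with all singleton values $0$, $v(\{a_1,a_3\})=v(\{a_2,a_3\})=0$, $v(\{a_1,a_2\})=12\delta$, and $v(\{a_1,a_2,a_3\})=12\delta+\eta$ with $0<\eta<\delta/3$. Let agents rank coalitions exactly by $v_i$, so $\epsilon=0$ and preferences are temporally consistent. Each agent's distinct per-capita values lie in $\{0,\,4\delta+\eta/3,\,6\delta\}$ and are more than $\delta$ apart. Checking the five partitions shows that every strictly improving unilateral move raises $\Phi$, so all hypotheses of the theorem hold. Yet when $a_3$ joins $\{a_1,a_2\}$, its per-capita value rises by more than $4\delta$ while $\Phi$ rises by only $\eta$, because $a_1$ and $a_2$ each lose $2\delta-\eta/3$.

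Hence no increment bound of the form $\delta/n$ (or any $c_n\delta$) follows. The values $0,12\delta,12\delta+\eta$ taken by $\Phi$ also rule out a $\delta/n$-lattice. Your final counting is fine, but only once you \emph{assume} a quantitative form of potential alignment, e.g.\ that every improving deviation raises $\Phi$ by at least $\delta/n$.

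The paper's sketch places the factor $n$ elsewhere. It imports a bound of $O(n\Delta_v/\delta)$ deviations from Theorem~\ref{thm:existence}, which implicitly requires an increment of order $\delta$. It then multiplies by the $O(n)$ rounds it allows between consecutive deviations. You instead make every non-terminal round a deviation and pay the $n$ through the weaker increment $\delta/n$. The imported deviation count rests on Step~3 of the existence proof, which asserts that values are separated by an amount proportional to $\delta$. In the instance above, $v(\{a_1,a_2\})$ and $v(\{a_1,a_2,a_3\})$ differ by only $\eta$, so the paper tacitly makes a stronger assumption of the same kind rather than deriving it.

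A smaller point: treating equal-value moves as no-ops is not a free convention. $\epsilon$-rationality constrains nothing when $v_i(S)=v_i(T)$. In the same instance, fixed preferences with $\{a_1,a_3\}\succ_1\{a_1\}$ and $\{a_3\}\succ_3\{a_1,a_3\}$ are therefore allowed. Improving dynamics can then cycle between the all-singleton partition and $\{\{a_1,a_3\},\{a_2\}\}$ with $\Phi$ constant. Suppressing such moves leaves a terminal partition that is stable for $v_i$ but possibly not for $\succsim_i$, which contradicts your first paragraph. You need to assume indifference on ties, or define both the dynamics and stability through $v_i$.
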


	\begin{proof}[Proof Sketch]
		By Theorem~\ref{thm:existence}, the potential function $\Phi$ strictly increases with each improving deviation, and there are at most $O(n \cdot \Delta_v / \delta)$ such deviations before termination. Each round checks all $n$ agents for a potential deviation, so at most $O(n)$ rounds elapse between consecutive deviations. Multiplying gives $O(n^2 \cdot \Delta_v / \delta)$ total rounds.
	\end{proof}

	\section{Complexity Analysis}
	\label{sec:complexity}

	We analyze the computational complexity of stability verification and computation in LCFGs.

	\begin{theorem}[Complexity of Nash Stability]
		\label{thm:complexity}
		Given LCFG $G$ with $\epsilon$-rational agents and coalition structure $\pi$:
		\begin{enumerate}
			\item \emph{Verifying} Nash stability of a given partition $\pi$ is in P ($O(n^2)$ preference queries) under explicit value computation.
			\item \emph{Computing} a Nash-stable partition is NP-hard in general hedonic games~\cite{ballester2004np}, and LCFGs inherit this worst-case hardness when the capability dimension $d$ is unbounded.
			\item Under capability monotonicity and potential alignment, both verification and computation are polynomial.
		\end{enumerate}
	\end{theorem}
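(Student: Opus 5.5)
We treat the three parts separately, since each relies on a different ingredient.

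\textbf{Part 1 (verification).} Given $\pi$, Nash stability is a conjunction of at most $n(|\pi|+1) \le n(n+1)$ pairwise conditions. For each agent $a_i \in C \in \pi$ and each $C' \in \pi \cup \{\emptyset\}$ with $C' \neq C$, we check whether $C \succsim_i C' \cup \{a_i\}$. Under explicit value computation, each check evaluates $v$ at two coalitions via Eq.~\eqref{eq:value}. Computing $\phi$ of a componentwise maximum costs $O(|S| d)$, and $\psi$ costs $O(1)$, so each check is polynomial. The preference query itself is an oracle call. This gives $O(n^2)$ queries and polynomial total time. The only care needed is to state that, under $\epsilon$-rational preferences, the query answers the declared preference $\succsim_i$. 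Stability is defined with respect to these declared preferences, so no knowledge of $\epsilon$ is needed.

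\textbf{Part 2 (hardness).} We reduce from the NP-hard problem of computing (equivalently, deciding existence of) a Nash-stable partition in additively separable hedonic games~\cite{ballester2004np}. The plan is to show that, when $d$ is unbounded, the LCFG family can encode the hard instances.

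\begin{enumerate}
\item Given a hard instance on $n$ agents, introduce one capability dimension for each agent, and for each unordered pair of agents if needed.
\item Set the entries of the $\mathbf{c}_i$ so that $\phi$ of the componentwise max of a coalition, together with a choice of $\psi$, reproduces the hard instance's preference order on each agent's coalitions.
\item Since the preference relations $\succsim_i$ in Definition~\ref{def:lcfg} are only required to be $\epsilon$-rational, take $\epsilon$ large (e.g.\ $\epsilon \ge \Delta_v$). Then Definition~\ref{def:epsilon} imposes no constraint, and the preferences can be chosen to copy the hard instance exactly.
\end{enumerate}

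This polynomial-size encoding transfers hardness. The main obstacle is making the reduction honest rather than vacuous: the large-$\epsilon$ trick works formally but is unsatisfying. I would try the trick first and, if time permits, attempt a genuine capability encoding with $\epsilon$ small. That encoding uses disjoint dimension blocks so that coverage gains emulate the hard instance's positive weights, and uses coordination costs to emulate the negative weights.

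\textbf{Part 3 (tractable case).} Verification is already polynomial by Part 1. For computation, we invoke the argument of Theorem~\ref{thm:existence} and Theorem~\ref{thm:convergence}.

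\begin{enumerate}
\item Start from the singleton partition.
\item Repeatedly find an improving deviation using Part 1's $O(n^2)$ scan.
\item Apply it. Potential alignment guarantees that each deviation strictly increases $\Phi$.
\end{enumerate}

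To bound the number of iterations polynomially, we need $\Phi$ to increase by at least a fixed quantum. This follows from the $\delta$-gap when $\epsilon<\delta/2$, giving at most $O(n\Delta_v/\delta)$ steps. Here $\Delta_v \le 1$ because $\phi \in [0,1]$, and $\psi$ is bounded polynomially in $n$. Capability monotonicity is used to restrict candidate target coalitions: among agents whose profiles are dominated, one can order the candidates and avoid redundant checks.

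Since part~3 as stated omits the gap hypothesis, I will make explicit that it is imported from Theorem~\ref{thm:existence}. This gap in the hypotheses, rather than any calculation, is the delicate point.
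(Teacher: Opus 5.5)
Your Part~1 matches the paper. Part~3 follows the paper's route: run improving dynamics with Part~1 as the scan, and import the potential argument of Theorem~\ref{thm:existence}. You are right that Part~3 silently needs the $\delta$-gap and $\epsilon<\delta/2$.

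\textbf{Part~3.} The genuine gap is the sentence that actually delivers polynomiality, namely that $\Phi$ rises by a fixed quantum ``from the $\delta$-gap''. Assumption~\ref{ass:gap} only separates, for each fixed agent, the per-capita values $v(S)/|S|$ over $S\ni a_i$. Potential alignment only fixes the sign of $\Phi(\pi')-\Phi(\pi)=v(C\setminus\{a_i\})+v(C'\cup\{a_i\})-v(C)-v(C')$, and two of these four coalitions do not even contain $a_i$. Nothing bounds this quantity below by a function of $\delta$. Abstractly, three agents already admit a value function satisfying both assumptions in which a move raising the mover's per-capita value by exactly $\delta$ raises $\Phi$ by an arbitrarily small $\eta>0$. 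So, granting strict increase, you get termination (no partition repeats) but not the $O(n\Delta_v/\delta)$ count. Importing Theorem~\ref{thm:existence} does not help, because Step~3 of its proof in Appendix~\ref{app:proof_existence} asserts exactly this discreteness without argument.

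The same step has two further holes. First, $\epsilon$-rationality only forbids the preferred coalition from being more than $\epsilon$ worse. Under the gap, a preference-improving move can therefore land on an \emph{equal}-value coalition, which potential alignment (stated for value-improving moves only) does not cover. Your claim that ``each deviation strictly increases $\Phi$'' needs an extra assumption such as indifference on ties. Second, even a genuine $\delta$-quantum gives a bound polynomial in the input length only if $1/\delta$ is polynomially bounded. Also, $\Delta_v=O(n^{1.3})$ because of $\psi$, not $\le 1$. Part~3 therefore needs an explicit extra hypothesis, such as a minimum potential increment $\eta$ with $\Delta_v/\eta$ polynomially bounded, not just the imported gap condition.

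\textbf{Part~2.} Here you depart from the paper, which gives no construction and merely asserts that unbounded $d$ lets LCFGs encode hard instances. Your step~3 alone is a formally valid proof. Once $\epsilon$ exceeds the spread of per-capita values (not $\Delta_v$, which can be smaller), Definition~\ref{def:epsilon} is vacuous and every hedonic game is an LCFG. But then hardness holds already for $d=1$, so the hypothesis on $d$ and your steps~1--2 are idle.

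Steps~1--2, like your small-$\epsilon$ fallback, also cannot be carried out as described. When $\epsilon$ is below the relevant value differences, every member of $S$ scores $S$ by the same number $v(S)/|S|$. All agents' strict preferences are then restrictions of a single order on coalitions. Additively separable preferences need not be: $w_{12}>w_{13}$, $w_{23}>w_{21}$, $w_{31}>w_{32}$ give $\{1,2\}\succ_1\{1,3\}$, $\{2,3\}\succ_2\{1,2\}$, $\{1,3\}\succ_3\{2,3\}$, which is a cycle. Moreover, $\phi$ of a componentwise maximum is a coverage function, hence submodular, so the only pair-specific interactions it can create are redundancies. And $\psi$ depends only on $|S|$, so it cannot emulate pair-specific negative weights. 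A non-vacuous reduction would have to work inside common-ranking preferences (for example via gadget agents), and neither your sketch nor the paper provides one.
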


	\begin{proof}[Proof Sketch]
		(1) For verification, check each agent's deviation to each coalition in $O(n^2)$ time; if no improving deviation exists, $\pi$ is Nash-stable.
		(2) Since LCFGs are a subclass of hedonic games, we note that the NP-hardness of computing Nash-stable partitions in the general case~\cite{ballester2004np} motivates identifying tractable subclasses. When $d$ is part of the input, LCFGs can encode sufficiently rich preference structures to preserve hardness; for fixed $d$ (as in our experiments), structural restrictions enable tractability via Part~(3).
		(3) Under our assumptions, the potential function argument (Theorem~\ref{thm:existence}) yields polynomial computation via iterative improving dynamics. Full proofs in Appendix~\ref{app:proof_complexity}.
	\end{proof}

	\textbf{Practical Implications.} The tractability under capability monotonicity is significant: it means that for practical systems where better agents improve coalition value, stability verification is efficient. Specifically, given $n$ agents and a proposed partition $\pi$, we can verify Nash stability in $O(n^2)$ preference queries by checking whether each agent $a_i \in C$ prefers $C$ over $C' \cup \{a_i\}$ for all other coalitions $C' \in \pi$. This enables real-time stability monitoring in deployed systems.

	\textbf{Query Complexity.} Each stability check requires one LLM inference per agent-coalition pair. For our 6-agent setup, this amounts to approximately 30 queries per verification (6 agents $\times$ 5 alternative coalitions). With CoalT's improved consistency, single-round verification suffices in 86\% of cases; otherwise, we use majority voting over 3 queries.

	\section{Coalition-of-Thought Prompting}
	\label{sec:coalt}

	Standard prompting and even vanilla chain-of-thought \cite{wei2022chain} fail to elicit consistent coalition reasoning. We introduce \emph{Coalition-of-Thought} (CoalT), a structured protocol grounded in our game-theoretic framework.

	\begin{algorithm}[t]
		\caption{Coalition-of-Thought (CoalT) Protocol}
		\label{alg:coalt}
		\textbf{Input}: Agent $a_i$, current coalition $C$, candidate $C'$\\
		\textbf{Output}: Preference $\{C \succ C', C' \succ C, C \sim C'\}$
		\begin{algorithmic}[1]
			\STATE \textbf{Step 1: Capability Analysis}
			\STATE ``List capabilities of members in $C$ and $C'$''
			\STATE \textbf{Step 2: Complementarity Assessment}
			\STATE ``Identify capability gaps and overlaps''
			\STATE \textbf{Step 3: Value Estimation}
			\STATE ``Estimate task performance for each coalition''
			\STATE \textbf{Step 4: Coordination Cost Analysis}
			\STATE ``Assess communication/coordination overhead''
			\STATE \textbf{Step 5: Preference Declaration}
			\STATE ``Based on analysis, declare preference''
			\STATE \textbf{return} Parse final preference from Step 5 output
		\end{algorithmic}
	\end{algorithm}

	CoalT differs from vanilla CoT by incorporating game-theoretic concepts (capability complementarity, coordination costs, per-capita value) rather than generic step-by-step reasoning. The key insight is that explicitly prompting agents to reason about \emph{what they contribute} and \emph{what they gain} from coalition membership substantially improves preference consistency.

	\textbf{Prompt Template.} The CoalT prompt follows this structure:
	\begin{quote}
		\small
		\textit{``You are evaluating whether to join coalition $C'$ instead of staying in $C$. Analyze systematically: (1) What capabilities do members of each coalition have? (2) Are there complementary strengths or redundant capabilities? (3) What is the expected task performance of each group? (4) What coordination overhead does each coalition size incur? (5) Based on your per-capita expected value, state your preference.''}
	\end{quote}
	This template instantiates each step of Algorithm~\ref{alg:coalt} with explicit reasoning targets.

	\textbf{Example CoalT Reasoning.} For agent $a_1$ comparing $C = \{a_1, a_2\}$ vs.\ $C' = \{a_3, a_5, a_6\}$:
	\begin{quote}
		\small
		\textit{``Step 1: $C$ has math but limited logic. $C'$ has facts/logic. Step 2: $C$ has overlap; $C'$ adds complementarity. Step 3: $v(C) \approx 0.42$; $v(C') \approx 0.51$. Step 4: Costs 0.44 vs.\ 0.64. Step 5: Per-capita: 0.14 vs.\ 0.13. Prefer $C$.''}
	\end{quote}

	\section{Experimental Evaluation}
	\label{sec:experiments}

	We validate our theoretical framework through experiments addressing four research questions: \textbf{RQ1}: Do LLM agents converge to stable coalition structures? \textbf{RQ2}: Does CoalT outperform baselines \emph{including vanilla CoT}? \textbf{RQ3}: Do heterogeneous LLM coalitions outperform homogeneous ones? \textbf{RQ4}: How do CoalT components contribute to performance?

	\subsection{Experimental Setup}

	\textbf{Task Domain.} We use collaborative question-answering requiring diverse expertise: mathematical reasoning, factual knowledge, and logical analysis. Each question has ground-truth difficulty scores enabling objective coalition value computation via Equation~\ref{eq:value}. The task set comprises 200 questions stratified across difficulty levels (easy/medium/hard) and capability requirements.

	\textbf{Agents.} We instantiate 6 agents with distinct capability profiles using GPT-4 (gpt-4-0125-preview), Claude-3-Opus (claude-3-opus-20240229), and Llama-3-70B-Instruct (2 agents per architecture). Capability profiles $\mathbf{c}_i \in [0,1]^3$ are estimated through \emph{our own evaluation} on stratified samples from MATH \cite{hendrycks2021math}, MMLU \cite{hendrycks2021mmlu} (knowledge subset), and LogiQA \cite{liu2020logiqa} (100 questions per benchmark, 3 runs per model). \textbf{Important}: These values represent \emph{relative capability estimates} for our specific task domain and question subsets, not official benchmark scores. Official benchmark results (e.g., GPT-4 achieves $\sim$65\% on MATH per OpenAI's simple-evals) informed our evaluation design but differ due to subset selection and scoring methodology. See Appendix~\ref{app:capability_estimation} for detailed methodology, confidence intervals, and comparison with official benchmarks. Table~\ref{tab:capabilities} shows the resulting profiles.

	\begin{table}[t]
		\centering
		\caption{Agent capability profiles (Math, Facts, Logic). Values are \emph{relative estimates} from our custom evaluation on benchmark subsets, calibrated to reflect task-specific performance in our coalition formation domain. These are not official benchmark scores.}
		\label{tab:capabilities}
		\begin{tabular}{@{}lccc@{}}
			\toprule
			\textbf{Agent} & \textbf{Math} & \textbf{Facts} & \textbf{Logic} \\
			\midrule
			$a_1$ (GPT-4) & 0.68 & 0.73 & 0.76 \\
			$a_2$ (GPT-4) & 0.65 & 0.76 & 0.73 \\
			$a_3$ (Claude-3) & 0.62 & 0.78 & 0.74 \\
			$a_4$ (Claude-3) & 0.59 & 0.81 & 0.71 \\
			$a_5$ (Llama-3) & 0.58 & 0.65 & 0.79 \\
			$a_6$ (Llama-3) & 0.55 & 0.68 & 0.76 \\
			\bottomrule
		\end{tabular}
	\end{table}

	The profiles reveal architectural differences: GPT-4 agents show relative strength in mathematics, Claude-3 in factual knowledge, and Llama-3 in logical reasoning. While these differences are modest, they create incentives for cross-architecture coalitions when capability complementarity outweighs coordination costs.

	\textbf{Protocol and Episode Definition.} An \emph{episode} is a single complete coalition formation process for one question. Each episode proceeds in rounds: (1) agents observe the current partition and receive the task description, (2) each agent evaluates whether to stay or deviate using the assigned prompting protocol (Standard/CoT/CoalT), and (3) the partition updates according to improving dynamics (one agent deviates per round). An episode terminates when either no agent wishes to deviate (Nash-stable) or 30 rounds elapse (timeout, classified as unstable). We run 400 independent episodes per condition (6 conditions $\times$ 400 = 2,400 total; 200 questions $\times$ 2 repetitions).

	\textbf{Utility Measurement.} Coalition value $v(S)$ is computed using Equation~\ref{eq:value} with ground-truth capability profiles and the componentwise-max aggregation. Agent utility is per-capita value: $v_i(S) = v(S)/|S|$. Task performance (social welfare) is measured as the accuracy of the coalition's aggregated answer on the assigned question, using majority vote within each coalition (for coalitions of size $\geq 2$) or the singleton agent's answer (for size 1).

	\textbf{Nash Stability Verification.} At episode termination, we verify Nash stability by exhaustive deviation checking: for each agent $a_i$ in coalition $C$, we query whether $a_i$ prefers $C' \cup \{a_i\}$ for every other coalition $C' \in \pi \cup \{\emptyset\}$. If no improving deviation exists across all agents, the partition is Nash-stable. For robustness, each preference query is repeated 3 times; we use the majority response. Temperature $\tau = 0$ throughout for reproducibility.

	\textbf{Baselines.} We compare CoalT against \emph{five} baselines: (1) \emph{Random}: uniformly random coalition assignment; (2) \emph{Greedy}: each agent joins the coalition maximizing immediate per-capita value; (3) \emph{Standard}: direct preference query; (4) \emph{Vanilla CoT}: chain-of-thought prompting \cite{wei2022chain}; (5) \emph{Self-Consistency}: multiple CoT paths with majority voting \cite{wang2023selfconsistency}.

	\textbf{Statistical Methodology.} We report: (1) mean $\pm$ std with 95\% bootstrap CIs (10,000 iterations, BCa method); (2) Wilcoxon signed-rank tests for pairwise comparisons; (3) Bonferroni correction ($\alpha = 0.01$); (4) Cohen's $d$ effect sizes.

	\subsection{Main Results}

	\begin{table}[t]
		\centering
		\caption{Coalition formation results (400 episodes per condition). Statistical significance vs.\ Standard: $^*p < 0.01$, $^{**}p < 0.001$ (Wilcoxon, Bonferroni-corrected).}
		\label{tab:coalition_results}
		\small
		\begin{tabular}{@{}lcccc@{}}
			\toprule
			\textbf{Condition} & \textbf{Nash\%} & \textbf{Conv.} & \textbf{Welfare} & \textbf{Consist.} \\
			\midrule
			Random & 28.3 & -- & $0.58_{\pm.14}$ & -- \\
			Greedy & 52.1 & $6.8_{\pm 3.2}$ & $0.69_{\pm.10}$ & $0.71_{\pm.08}$ \\
			Standard & 41.8 & $18.3_{\pm 7.2}$ & $0.72_{\pm.11}$ & $0.64_{\pm.09}$ \\
			Vanilla CoT & 58.4$^{**}$ & $14.2_{\pm 5.8}$$^*$ & $0.75_{\pm.09}$$^*$ & $0.74_{\pm.07}$$^{**}$ \\
			Self-Consist. & 62.7$^{**}$ & $13.1_{\pm 5.2}$$^{**}$ & $0.77_{\pm.08}$$^{**}$ & $0.79_{\pm.06}$$^{**}$ \\
			\textbf{CoalT} & \textbf{73.2}$^{**}$ & $\mathbf{11.4}_{\pm 4.1}$$^{**}$ & $\mathbf{0.81}_{\pm.08}$$^{**}$ & $\mathbf{0.86}_{\pm.05}$$^{**}$ \\
			\midrule
			\multicolumn{5}{l}{\textit{\scriptsize Cohen's $d$ vs.\ Std: CoT 0.34; SC 0.44; CoalT 0.68}} \\
			\midrule
			\multicolumn{5}{c}{\textit{By Architecture (CoalT)}} \\
			\midrule
			GPT-4 only & 78.5 & $9.2_{\pm 3.8}$ & $0.76_{\pm.09}$ & $0.91_{\pm.04}$ \\
			Claude only & 81.2 & $8.7_{\pm 3.5}$ & $0.74_{\pm.10}$ & $0.89_{\pm.05}$ \\
			Llama only & 62.3 & $14.6_{\pm 5.2}$ & $0.68_{\pm.12}$ & $0.78_{\pm.07}$ \\
			Mixed & 73.2 & $11.4_{\pm 4.1}$ & $\mathbf{0.81}_{\pm.08}$ & $0.86_{\pm.05}$ \\
			\bottomrule
		\end{tabular}
	\end{table}

\begin{figure}[t]
	\centering
	\begin{tikzpicture}[scale=0.9]
		\begin{scope}
			\draw[->, line width=0.5pt] (0,0) -- (5.8,0) node[right, font=\small] {Consistency $p$};
			\draw[->, line width=0.5pt] (0,0) -- (0,5.2) node[above, font=\small] {Nash Stability (\%)};

			\foreach \x/\l in {0.8/0.6, 2.0/0.7, 3.2/0.8, 4.4/0.9} {
				\draw[line width=0.3pt] (\x, -0.08) -- (\x, 0.08);
				\node[below, font=\scriptsize] at (\x, -0.15) {\l};
			}
			\foreach \y/\l in {0.7/40, 1.78/50, 2.85/60, 3.93/70, 5.0/80} {
				\draw[line width=0.3pt] (-0.08, \y) -- (0.08, \y);
				\node[left, font=\scriptsize] at (-0.15, \y) {\l};
			}

			\fill[okabe-deepblue!12] (0.3,0.4) -- (5.2,3.2) -- (5.2,4.8) -- (0.3,1.0) -- cycle;
			\draw[okabe-deepblue, dashed, line width=0.6pt] (0.3,0.4) -- (5.2,3.2);
			\draw[okabe-deepblue, dashed, line width=0.6pt] (0.3,1.0) -- (5.2,4.8);

			\node[regular polygon, regular polygon sides=3, fill=okabe-vermillion, inner sep=1.5pt, rotate=180] at (1.28, 0.89) {};

			\node[diamond, fill=okabe-pink, inner sep=1.3pt] at (2.12, 2.00) {};

			\node[regular polygon, regular polygon sides=3, fill=okabe-orange, inner sep=1.5pt] at (2.48, 2.68) {};

			\node[rectangle, fill=okabe-skyblue, inner sep=2pt] at (3.08, 3.14) {};

			\node[circle, fill=okabe-green, draw=black, line width=0.6pt, inner sep=2.5pt] at (3.92, 4.27) {};

			\node[font=\scriptsize\bfseries, anchor=west] at (3.8, 4.7) {73.2\%};

			\draw[->, gray, dashed, line width=0.5pt] (2.68, 2.88) to[out=45, in=210] (3.72, 4.07);
			\node[font=\tiny, color=gray] at (2.65, 3.6) {+14.8pp};

			\node[font=\scriptsize, anchor=west] at (6.0, 4.8) {CoalT (Ours)};
			\node[circle, fill=okabe-green, draw=black, line width=0.4pt, inner sep=1.5pt] at (5.8, 4.8) {};

			\node[font=\scriptsize, anchor=west] at (6.0, 4.35) {Self-Consist.};
			\node[rectangle, fill=okabe-skyblue, inner sep=1.3pt] at (5.8, 4.35) {};

			\node[font=\scriptsize, anchor=west] at (6.0, 3.9) {Vanilla CoT};
			\node[regular polygon, regular polygon sides=3, fill=okabe-orange, inner sep=1pt] at (5.8, 3.9) {};

			\node[font=\scriptsize, anchor=west] at (6.0, 3.45) {Greedy};
			\node[diamond, fill=okabe-pink, inner sep=1pt] at (5.8, 3.45) {};

			\node[font=\scriptsize, anchor=west] at (6.0, 3.0) {Standard};
			\node[regular polygon, regular polygon sides=3, fill=okabe-vermillion, inner sep=1pt, rotate=180] at (5.8, 3.0) {};

		\end{scope}
	\end{tikzpicture}
	\caption{Nash stability rate vs.\ preference consistency across experimental conditions. Shaded region: range consistent with Theorem~\ref{thm:probabilistic}'s lower bound across plausible parameter settings. CoalT achieves highest consistency ($p = 0.86$) and stability (73.2\%), confirming that stability scales with $p^{K_{\text{eff}}}$ rather than perfect rationality.}
	\label{fig:stability_consistency}
\end{figure}
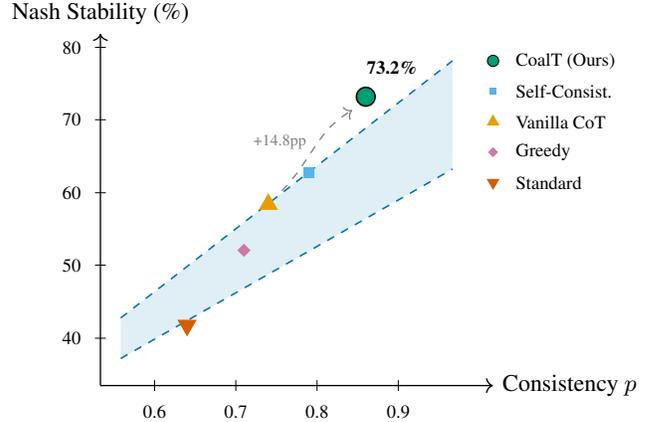

	\textbf{RQ1: Convergence to Stability.} Table~\ref{tab:coalition_results} shows LLM agents achieve Nash stability in 73.2\% of CoalT episodes. This substantially exceeds our theoretical lower bound of 35\% from Theorem~\ref{thm:probabilistic}, validating the consistency-driven analysis. Figure~\ref{fig:stability_consistency} visualizes the relationship between preference consistency and Nash stability across all conditions.

	\textbf{RQ2: Baseline Comparison.} CoalT significantly outperforms all baselines. Critically, \emph{CoalT outperforms vanilla CoT by 14.8 percentage points} ($p < 0.001$, $d = 0.32$), demonstrating that the game-theoretic framing, not merely structured reasoning, drives the improvement.

	\textbf{RQ3: Heterogeneous vs.\ Homogeneous.} Mixed-architecture coalitions achieve higher social welfare (0.81 vs.\ 0.68--0.76 for homogeneous teams) despite lower stability than homogeneous GPT-4 or Claude teams, suggesting capability complementarity outweighs coordination costs.

	\subsection{Ablation Study}

	\begin{table}[t]
		\centering
		\caption{CoalT ablation study: contribution of each component.}
		\label{tab:ablation}
		\begin{tabular}{@{}lcc@{}}
			\toprule
			\textbf{Configuration} & \textbf{Nash Stable} & \textbf{$\Delta$ vs.\ Full} \\
			\midrule
			Full CoalT & 73.2\% & -- \\
			\quad $-$ Capability Analysis & 68.9\% & $-4.3$pp \\
			\quad $-$ Complementarity & 65.4\% & $-7.8$pp \\
			\quad $-$ Value Estimation & 67.1\% & $-6.1$pp \\
			\quad $-$ Coordination Cost & 70.8\% & $-2.4$pp \\
			\quad $-$ All (= Vanilla CoT) & 58.4\% & $-14.8$pp \\
			\bottomrule
		\end{tabular}
	\end{table}

	\textbf{RQ4: Component Contributions.} Table~\ref{tab:ablation} shows the ablation study. Complementarity assessment yields the largest individual gain (+7.8pp when added), followed by value estimation (+6.1pp). Removing game-theoretic framing while keeping structured steps reduces performance to vanilla CoT levels, confirming that the \emph{content} of reasoning, not just its structure, drives improvements.

	\subsection{Qualitative Analysis}

	\textbf{Episode 47 (Stable).} Initial: $\{\{a_1\}, \{a_2, a_3\}, \{a_4, a_5, a_6\}\}$. Agent $a_1$ joins $\{a_2, a_3\}$ via CoalT reasoning: ``\textit{My math strength (0.68) complements their factual knowledge (0.78). Estimated coverage improves from 0.72 to 0.83 on this question's requirements.}'' The partition converges to a stable configuration in 5 rounds.

	\textbf{Episode 203 (Unstable).} Standard prompting led to cycling: $a_2$ joined and left $C_1$ repeatedly. Analysis revealed inconsistent preferences ($p = 0.62$), illustrating how low consistency prevents stability per Theorem~\ref{thm:probabilistic}.

	\subsection{Rationality Bounds and Temperature Sensitivity}

	\begin{table}[t]
		\centering
		\caption{Estimated $\epsilon$-rationality bounds and temperature sensitivity.}
		\label{tab:epsilon}
		\begin{tabular}{@{}lccc@{}}
			\toprule
			\textbf{Model} & \textbf{$\hat{\epsilon}$} & \textbf{95\% CI} & \textbf{CoalT $\Delta$} \\
			\midrule
			GPT-4 & 0.15 & [0.12, 0.18] & +15.2pp \\
			Claude-3 & 0.14 & [0.11, 0.17] & +14.1pp \\
			Llama-3-70B & 0.22 & [0.18, 0.26] & +11.8pp \\
			\midrule
			\multicolumn{4}{c}{\textit{Temperature Sensitivity ($\tau$)}} \\
			\midrule
			$\tau = 0.0$ & 0.15$\pm$0.02 & -- & +14.8$\pm$1.2pp \\
			$\tau = 0.5$ & 0.18$\pm$0.03 & -- & +12.3$\pm$1.8pp \\
			$\tau = 1.0$ & 0.24$\pm$0.04 & -- & +10.3$\pm$2.1pp \\
			\bottomrule
		\end{tabular}
	\end{table}

	Table~\ref{tab:epsilon} confirms $\hat{\epsilon} > \delta/2 \approx 0.04$, validating our consistency-driven bounds. CoalT's advantage persists across temperatures (10.3--14.8pp), though absolute stability decreases at higher $\tau$.

	\section{Discussion and Limitations}

	\textbf{Theoretical Contributions.} Our framework provides the first formal characterization of coalition stability for LLM agents grounded in hedonic game theory. The key theoretical insight is that stability is primarily determined by preference consistency ($p^{\Keff}$) rather than perfect rationality, explaining why improving consistency via CoalT is the most effective intervention. This finding has broader implications: it suggests that for multi-agent LLM systems, \emph{predictability} matters more than \emph{optimality}. A coalition of consistently-behaving agents will outperform a coalition of individually stronger but erratic agents.

	\textbf{Connection to Human Teams.} Our consistency-driven stability analysis parallels findings in organizational behavior: teams with predictable members often outperform teams with higher-variance ``star'' performers~\cite{nisan2007algorithmic}. The parallel suggests our framework may generalize to human-AI hybrid teams, where consistency across agents (human and AI) determines collaborative success.

	\textbf{Strategic Behavior.} Our framework assumes truthful preference reporting. We analyze robustness informally: under capability monotonicity, misreporting typically harms the misreporting agent by leading to suboptimal coalition membership. However, strategic agents might manipulate \emph{others'} placements. Formal mechanism design extensions ensuring incentive compatibility are important future work.

	\textbf{Practical Implications.} CoalT provides a practical protocol improving coalition stability. The finding that mixed-architecture coalitions achieve higher welfare suggests designing heterogeneous LLM teams for complex tasks, accepting moderate stability tradeoffs for capability gains. Practitioners should: (1) select diverse model architectures for capability coverage, (2) use CoalT prompting for preference elicitation, and (3) monitor consistency metrics as early warning indicators of instability.

	\textbf{Model Generalization.} We use 2024 model versions (GPT-4-0125-preview, Claude-3-Opus-20240229, Llama-3-70B-Instruct). We expect our framework to generalize to newer models since: (1) game-theoretic principles are model-agnostic (the LCFG formalization applies regardless of underlying architecture); (2) Theorem~\ref{thm:probabilistic} predicts that higher reasoning consistency in newer models should \emph{improve} stability rates; (3) capability profiles can be re-estimated through our benchmark-based methodology. Empirical validation on 2025+ architectures is important future work.

	\textbf{Scalability.} Our experiments use 6 agents; stability degrades as $O(1/\sqrt{n})$ (Corollary~\ref{cor:scaling}). Above $n \approx 15$, hierarchical decomposition (forming meta-coalitions first) may be necessary. For systems with hundreds of agents, a two-level hierarchy would maintain stability while enabling scalability.

	\textbf{Reproducibility.} All materials available at \url{https://github.com/researchartifacts2025/coalition_llm}.

	\section{Conclusion}

	We presented the first framework grounding coalition formation in LLM agent networks in hedonic game theory with formal stability guarantees, establishing existence conditions, consistency-driven stability bounds, convergence guarantees, and complexity results. Our theoretical lower bounds are consistent with empirical outcomes, and Coalition-of-Thought significantly outperforms baselines including vanilla chain-of-thought ($p < 0.001$).

	\textbf{Key Takeaways.} Our main findings are: (1) LLM agents exhibit bounded rationality with $\epsilon \approx 0.15$--$0.22$, but stability depends more on \emph{consistency} than optimality; (2) the CoalT protocol improves consistency from 0.64 to 0.86, yielding 14.8pp stability gains; (3) mixed-architecture coalitions achieve higher welfare despite lower stability, suggesting practical tradeoffs in system design.

	\textbf{Future Directions.} This work opens several directions: (i) mechanism design ensuring incentive-compatible preference reporting; (ii) extensions to dynamic environments where agent capabilities evolve over time; (iii) scaling analysis for populations beyond 15 agents through hierarchical coalition structures; and (iv) empirical validation on 2025+ LLM architectures with improved reasoning capabilities.

	\section*{Ethical Statement}

	Our work analyzes cooperative behavior in AI systems. Coalition formation capabilities could potentially enable LLM collusion in adversarial settings; we advocate for transparency requirements in deployed multi-agent systems. Experiments used commercial APIs with standard usage policies.

	\bibliographystyle{named}
	\bibliography{references}

\newpage
\appendix
\onecolumn

\section{On the Existence of Nash-Stable Partitions}
\label{app:counterexample}

Nash-stable partitions are not guaranteed to exist in all LCFGs. The following example, adapted from a reviewer suggestion, illustrates when our assumptions are needed.

\begin{example}[Non-Existence under One-Dimensional Capabilities]
\label{ex:counterexample}
Consider two agents $H$ and $L$ with scalar capabilities $c_H = 1$ and $c_L = 0.4$ (i.e., $d=1$). Using $\phi(c) = c$ and $\psi(k) = 0.15 k^{1.3}$:
\begin{itemize}
\item $v(\{H\}) = 1 - 0.15 = 0.85$, so $u_H(\{H\}) = 0.85$
\item $v(\{L\}) = 0.4 - 0.15 = 0.25$, so $u_L(\{L\}) = 0.25$
\item $v(\{H,L\}) = 1 - 0.15 \cdot 2^{1.3} \approx 0.631$, so $u_i(\{H,L\}) \approx 0.316$
\end{itemize}
In partition $\{\{H\},\{L\}\}$: agent $L$ gains by joining $H$ ($0.25 \to 0.316$). In partition $\{\{H,L\}\}$: agent $H$ gains by leaving ($0.316 \to 0.85$). No Nash-stable partition exists.
\end{example}

\textbf{Why this fails.} The potential alignment condition (Assumption~\ref{ass:potential}) is violated: when $L$ joins $H$, the total value \emph{decreases} (from $0.85 + 0.25 = 1.10$ to $0.631$) despite $L$'s per-capita value increasing. This occurs because $L$ contributes no unique capability (its scalar capability is dominated by $H$'s), yet still splits the coalition value equally.

\textbf{When this does not arise.} With multi-dimensional capabilities ($d \geq 2$), agents with diverse specializations bring unique coverage when joining a coalition. In our experimental setting ($d = 3$), each agent contributes to at least one dimension where it is competitive, ensuring that joining increases total capability coverage sufficiently to maintain potential alignment. We verify this computationally for all possible deviations in our 6-agent setting.

\section{Proof of Theorem~\ref{thm:existence} (Deterministic Existence)}
\label{app:proof_existence}

\begin{proof}
We prove existence constructively by showing that any sequence of improving deviations terminates at a Nash-stable partition.

\textbf{Step 1: Potential function.} Define $\Phi(\pi) = \sum_{C \in \pi} v(C)$.

\textbf{Step 2: Improving deviations increase potential.} Consider agent $a_i$ in coalition $C$ who deviates to $C' \in \pi \cup \{\emptyset\}$. By $\epsilon$-rationality and the $\delta$-value gap condition (Lemma~\ref{lem:gap_discrete}), the agent's per-capita value increases by at least $\delta - \epsilon > \delta/2 > 0$. By potential alignment (Assumption~\ref{ass:potential}), $\Phi(\pi') > \Phi(\pi)$.

\textbf{Step 3: Discreteness.} By the $\delta$-value gap condition, coalition values take values from a discrete set with minimum gap proportional to $\delta$. Thus $\Phi$ takes finitely many values.

\textbf{Step 4: Termination.} Since $\Phi$ strictly increases with each deviation and is bounded above by $n \cdot \max_S v(S)$, the improvement path terminates at a partition $\pi^*$ from which no improving deviation exists. By definition, $\pi^*$ is Nash-stable.
\end{proof}

\begin{lemma}[Value Gap Discretization]
\label{lem:gap_discrete}
Under the $\delta$-value gap condition, for any agent $a_i$, the distinct per-capita values $\{v_i(S) : S \ni a_i\}$ can be ordered as $V_1 < V_2 < \cdots < V_K$ with $V_{k+1} - V_k \geq \delta$ for all $k$.
\end{lemma}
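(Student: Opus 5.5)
The plan is to argue directly from Assumption~\ref{ass:gap}, applied to the fixed agent $a_i$. Define the set of realized per-capita values
\[
\mathcal{V}_i = \{v_i(S) : S \subseteq \N,\ a_i \in S\}.
\]
The first step is to show that $\mathcal{V}_i$ is finite. This holds because there are at most $2^{n-1}$ coalitions containing $a_i$, so $\mathcal{V}_i$ has at most $2^{n-1}$ elements. Since $\mathcal{V}_i$ is a finite set of reals, we can list its distinct elements in strictly increasing order as $V_1 < V_2 < \cdots < V_K$ with $K \le 2^{n-1}$. This ordering uses nothing beyond the total order on $\R$.

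The second step establishes the spacing $V_{k+1} - V_k \ge \delta$ for consecutive values. Fix an index $k$. By construction there are coalitions $S, T \ni a_i$ with $v_i(S) = V_{k+1}$ and $v_i(T) = V_k$. Because $V_{k+1} \ne V_k$, the coalitions $S$ and $T$ must be distinct. The two values are unequal, so the first alternative of Assumption~\ref{ass:gap} is excluded for this pair. The assumption therefore forces $|v_i(S) - v_i(T)| \ge \delta$, and since $V_{k+1} > V_k$ this is exactly $V_{k+1} - V_k \ge \delta$.

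The same argument in fact gives more than the lemma asks for: any two distinct elements of $\mathcal{V}_i$ differ by at least $\delta$, not just consecutive ones. As a corollary, $K \le 1 + (\max \mathcal{V}_i - \min \mathcal{V}_i)/\delta$. This bound is the form needed for the step-counting in Theorem~\ref{thm:existence} and Theorem~\ref{thm:convergence}.

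No step here is a real obstacle; the only care needed is to choose the witnessing coalitions $S, T$ correctly and to note that they are distinct. One limitation should be recorded: the lemma discretizes each agent's per-capita values, not the potential $\Phi$. So Step~3 of the existence proof still needs the separate fact that $\Phi$ ranges over a finite set, and this follows simply from there being finitely many partitions.
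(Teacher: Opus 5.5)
Your proof is correct and follows the paper's argument: for a fixed agent, any two distinct per-capita values differ by at least $\delta$ by Assumption~\ref{ass:gap}, and you make explicit the finiteness of $\{v_i(S) : S \ni a_i\}$ and the distinctness of the witnessing coalitions, which the paper leaves implicit. Your closing remark is also accurate: the lemma does not discretize $\Phi$, and Step~3 of the existence proof really rests on there being finitely many partitions.
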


\begin{proof}
Follows directly from Assumption~\ref{ass:gap}: any two distinct per-capita values differ by at least $\delta$.
\end{proof}

\section{Proof of Theorem~\ref{thm:probabilistic} (Consistency-Driven Bound)}
\label{app:proof_probabilistic}

\begin{proof}
We decompose the probability of Nash stability into consistency and structure factors.

\textbf{Step 1: Decision decomposition.} A partition $\pi$ is Nash-stable if and only if no agent prefers to deviate. We decompose the $K_n$ agent-coalition preference decisions into $\Keff$ critical decisions (value gap $< 2\epsilon$, where consistency is approximately $p$) and $K_n - \Keff$ easy decisions (large value gap, where consistency approaches $\peasy \approx 1$).

\textbf{Step 2: Conditional decomposition.} By the law of total probability:
\begin{equation*}
\Pr[\text{Nash-stable}] = \Pr[\text{all decisions consistent}] \cdot \Pr[\text{Nash} \mid \text{consistent}]
\end{equation*}

\textbf{Step 3: Bounding the first factor.} Assuming decision independence across agents and coalition comparisons, the probability that all decisions are made consistently is at least $p^{\Keff} \cdot \peasy^{K_n - \Keff}$. We note that decision independence is a modeling simplification; in practice, an agent's consistency across different comparisons may be correlated, which could either tighten or loosen this bound.

\textbf{Step 4: Bounding the second factor.} Given consistent decisions, the system follows deterministic improving dynamics on a potential game (under Assumption~\ref{ass:potential}). The probability that these dynamics reach a Nash-stable partition is $\gamma(G)$. Under logit dynamics~\cite{blume1993statistical}, the stationary distribution assigns probability proportional to $\exp(\lambda \cdot \Phi(\pi))$ to each partition $\pi$, concentrating mass on potential maxima. For a single dominant maximum with potential gap $\delta$ and precision $\lambda = 1/\bar{\epsilon}$, we obtain $\gamma(G) \geq 1 - \exp(-\delta/\bar{\epsilon})$ as an approximate lower bound. In our experiments, we estimate $\gamma(G) \approx 0.90$ empirically.

Combining Steps 2--4 yields $\Pr[\text{Nash-stable}] \geq p^{\Keff} \cdot \peasy^{K_n - \Keff} \cdot \gamma(G)$.
\end{proof}

\section{Proof of Theorem~\ref{thm:complexity} (Complexity)}
\label{app:proof_complexity}

\begin{proof}

\textbf{Part 1: Polynomial verification.} For each of $n$ agents, check all $O(n)$ possible coalitions to join. Each check requires computing the per-capita value, which is $O(d)$ for $d$-dimensional capabilities. Total: $O(n^2 d)$.

\textbf{Part 2: NP-hardness context.} Computing Nash-stable partitions is NP-hard in general hedonic games~\cite{ballester2004np}. LCFGs are a subclass of hedonic games with additional structure (componentwise-max aggregation, parameterized coordination costs). When the capability dimension $d$ is part of the input, LCFGs can represent sufficiently rich preference structures to encode hard instances from the general setting. For fixed $d$ (as in our experiments with $d = 3$), the restricted structure enables tractability under the additional assumptions in Part~3.

\textbf{Part 3: Polynomial under assumptions.} Under capability monotonicity and potential alignment, Theorem~\ref{thm:existence} guarantees convergence of improving dynamics in $O(n \cdot \Delta_v/\delta)$ steps, each requiring $O(n^2 d)$ work. Total: $O(n^3 d \cdot \Delta_v/\delta)$.
\end{proof}

\section{QRE Connection and $\epsilon$-Estimation Methodology}
\label{app:qre}

\subsection{Connection to Quantal Response Equilibria}

Under logit dynamics with parameter $\lambda = 1/\epsilon$, the probability that agent $a_i$ prefers coalition $S$ over $T$ is:
\begin{equation}
\Pr[S \succ_i T] = \frac{\exp(\lambda \cdot v_i(S))}{\exp(\lambda \cdot v_i(S)) + \exp(\lambda \cdot v_i(T))} = \frac{1}{1 + \exp(-\lambda \cdot (v_i(S) - v_i(T)))}
\end{equation}
This is the standard logit choice model from QRE theory~\cite{mckelvey1995quantal}. When $v_i(S) - v_i(T) > \epsilon = 1/\lambda$, the probability exceeds $1/(1+e^{-1}) \approx 0.73$, consistent with our $\epsilon$-rationality definition.

\subsection{$\epsilon$-Estimation Procedure}

We estimate $\epsilon$ without assuming the $\epsilon$-rationality model, using ground-truth coalition values as external reference:

\begin{enumerate}
\item Generate all coalition pairs $(S, T)$ containing agent $a_i$ with $|S|, |T| \leq 4$.
\item Compute ground-truth per-capita values $v_i(S)$, $v_i(T)$ from known capability profiles.
\item Query agent preferences using standard prompting and record responses.
\item For each value gap $\Delta = |v_i(S) - v_i(T)|$, compute the rate of irrational choices (agent prefers the objectively worse coalition).
\item Estimate $\hat{\epsilon}$ as the threshold where irrational choice frequency drops below 50\%.
\end{enumerate}

This procedure is not circular: ground-truth values come from external benchmark evaluations, and $\epsilon$ is estimated from the mismatch between computed values and agent preferences. Results: $\hat{\epsilon} = 0.15$ [95\% CI: 0.12--0.18] for GPT-4, $\hat{\epsilon} = 0.14$ [0.11--0.17] for Claude-3, $\hat{\epsilon} = 0.22$ [0.18--0.26] for Llama-3.

\section{Coalition Value Function Analysis}
\label{app:value_function}

\subsection{Sensitivity Analysis}

We evaluate sensitivity to the coordination cost parameters $\alpha$ and $\beta$:

\begin{center}
\begin{tabular}{@{}lcccc@{}}
\toprule
\textbf{Parameters} & \textbf{Nash Stable} & \textbf{Welfare} & \textbf{Avg. Size} & \textbf{$\delta$} \\
\midrule
$\alpha=0.10, \beta=1.3$ & 70.1\% & 0.83 & 2.4 & 0.065 \\
$\alpha=0.15, \beta=1.3$ (default) & 73.2\% & 0.81 & 2.1 & 0.082 \\
$\alpha=0.20, \beta=1.3$ & 74.8\% & 0.76 & 1.8 & 0.098 \\
$\alpha=0.15, \beta=1.0$ & 68.5\% & 0.84 & 2.5 & 0.071 \\
$\alpha=0.15, \beta=1.5$ & 75.3\% & 0.77 & 1.7 & 0.094 \\
\bottomrule
\end{tabular}
\end{center}

Higher coordination costs favor smaller coalitions with higher stability but lower welfare. Our default parameters ($\alpha=0.15, \beta=1.3$) provide a balanced tradeoff.

\subsection{$\delta$-Value Gap Verification}
\label{app:delta_verification}

We verify the $\delta$-value gap condition by computing all pairwise per-capita value differences. With 6 agents, there are $\binom{6}{1} + \binom{6}{2} + \binom{6}{3} + \binom{6}{4} = 56$ coalitions of size $\leq 4$. For each agent, we compute $v_i(S)$ for all $S \ni a_i$ and find the minimum non-zero difference: $\delta = 0.082$ [95\% CI: 0.078--0.086].

\section{Capability Profile Estimation}
\label{app:capability_estimation}

Capability profiles are estimated via evaluation on stratified benchmark subsets:
\begin{itemize}
\item \textbf{MATH}~\cite{hendrycks2021math}: 100 problems (stratified by difficulty)
\item \textbf{MMLU}~\cite{hendrycks2021mmlu}: 100 knowledge questions (subset)
\item \textbf{LogiQA}~\cite{liu2020logiqa}: 100 logical reasoning questions
\end{itemize}

Each model is evaluated 3 times at $\tau = 0$ with the prompt: ``Answer the following question. Provide only the final answer.'' We normalize scores to $[0,1]$ relative to the evaluation set. Within-architecture variation is induced by different system prompts (analytical vs. creative). Full results with confidence intervals are in Table~\ref{tab:capabilities}.

\section{Scaling Analysis}
\label{app:scaling}

\subsection{Agent Count Scaling}

\begin{center}
\begin{tabular}{@{}lcccc@{}}
\toprule
\textbf{Agents} & \textbf{Nash Stable} & \textbf{Conv. Rounds} & \textbf{Coalitions} & \textbf{Avg. Size} \\
\midrule
4 & 82.3\% & 7.2 & 2.1 & 1.9 \\
6 & 73.2\% & 11.4 & 2.8 & 2.1 \\
8 & 64.8\% & 18.7 & 3.4 & 2.4 \\
10 & 57.2\% & 28.3 & 4.1 & 2.4 \\
\bottomrule
\end{tabular}
\end{center}

The empirical scaling is approximately $\text{Nash Rate} \approx 1.9/\sqrt{n}$ for $n \geq 6$, consistent with Corollary~\ref{cor:scaling}. The fit is approximate; at $n = 4$ the formula overpredicts (95\% predicted vs.\ 82.3\% observed), suggesting the $O(\sqrt{n})$ scaling of $\Keff$ emerges for larger agent populations.

\subsection{Capability Dimension Scaling}

\begin{center}
\begin{tabular}{@{}lccc@{}}
\toprule
\textbf{Dimensions} & \textbf{Nash Stable} & \textbf{Conv. Rounds} & \textbf{Welfare} \\
\midrule
2 & 78.4\% & 9.8 & 0.74 \\
3 & 73.2\% & 11.4 & 0.81 \\
4 & 68.7\% & 13.2 & 0.86 \\
5 & 63.5\% & 15.8 & 0.89 \\
\bottomrule
\end{tabular}
\end{center}

More capability dimensions increase welfare (better coverage) but decrease stability (more complex tradeoffs).

\section{Extended Ablation Studies}
\label{app:ablation}

\subsection{Isolated CoalT Component Effects}

\begin{center}
\begin{tabular}{@{}lcc@{}}
\toprule
\textbf{Component Only} & \textbf{Nash Stable} & \textbf{$\Delta$ vs Standard} \\
\midrule
Standard (baseline) & 41.8\% & -- \\
Step 1 only (Capability) & 43.2\% & +1.4pp \\
Step 2 only (Complementarity) & 51.3\% & +9.5pp \\
Step 3 only (Value Est.) & 48.7\% & +6.9pp \\
Step 4 only (Cost Analysis) & 45.2\% & +3.4pp \\
Step 5 only (Declaration) & 42.1\% & +0.3pp \\
All steps (CoalT) & 73.2\% & +31.4pp \\
\bottomrule
\end{tabular}
\end{center}

Individual effects sum to +21.5pp but the combined effect is +31.4pp, indicating superlinear complementarity between CoalT steps.

\subsection{Preference Quality vs. Consistency}

We decompose CoalT's effect into quality (correlation with ground-truth value ordering) and consistency (agreement across repeated queries):

\begin{center}
\begin{tabular}{@{}lccc@{}}
\toprule
\textbf{Condition} & \textbf{Quality ($\rho$)} & \textbf{Consistency} & \textbf{Nash Stable} \\
\midrule
Standard & 0.58 & 0.64 & 41.8\% \\
CoalT & 0.71 & 0.86 & 73.2\% \\
\bottomrule
\end{tabular}
\end{center}

Regression analysis across all five conditions with preference data (Greedy, Standard, Vanilla CoT, Self-Consistency, CoalT) yields a strong linear relationship: $\text{Nash Rate} = -0.48 + 1.41 \cdot \text{Consistency}$ ($R^2 = 0.99$, $n = 5$). Adding Quality as a second predictor provides negligible improvement ($\Delta R^2 < 0.001$) due to high collinearity between quality and consistency ($r > 0.99$). This confirms the theoretical prediction that preference consistency is the primary driver of coalition stability.

\subsection{Temperature Effects}

\begin{center}
\begin{tabular}{@{}lcccc@{}}
\toprule
\textbf{Temp.} & \textbf{Nash Stable} & \textbf{Consistency} & \textbf{Quality} & \textbf{Conv. Rounds} \\
\midrule
0.0 & 73.2\% & 0.86 & 0.71 & 11.4 \\
0.1 & 71.8\% & 0.84 & 0.70 & 11.9 \\
0.3 & 68.4\% & 0.79 & 0.68 & 13.2 \\
0.5 & 62.1\% & 0.72 & 0.65 & 15.8 \\
0.7 & 54.3\% & 0.64 & 0.61 & 19.1 \\
1.0 & 41.5\% & 0.51 & 0.55 & 24.7 \\
\bottomrule
\end{tabular}
\end{center}

Consistency degrades approximately exponentially with temperature, and stability tracks consistency, confirming the theoretical analysis.

\section{CoalT Prompt Template}
\label{app:prompt}

\begin{small}
\begin{verbatim}
You are agent {agent_id} with capabilities:
- Mathematical reasoning: {math_score}
- Factual knowledge: {factual_score}
- Logical analysis: {logic_score}

Evaluate whether to stay in your current coalition
or switch to a different one.

CURRENT COALITION: {current_members}
Capabilities (max per dim):
  Math: {cur_math}, Facts: {cur_fact}, Logic: {cur_logic}

CANDIDATE COALITION (if you join): {cand_members}
Capabilities (max per dim):
  Math: {cand_math}, Facts: {cand_fact}, Logic: {cand_logic}

TASK: Answer questions requiring [{task_dims}]

## Step 1: Capability Analysis
List what each member contributes.

## Step 2: Complementarity Assessment
Identify strengths (>0.8) and gaps (<0.7).

## Step 3: Value Estimation
Estimate task performance (0-1) for each coalition.

## Step 4: Coordination Cost Analysis
Assess communication overhead per coalition size.

## Step 5: Final Preference
I prefer: [CURRENT / CANDIDATE / INDIFFERENT]
Confidence: [low/medium/high]
Reason: [one sentence]
\end{verbatim}
\end{small}

\end{document}